\documentclass[10pt,twocolumn,twoside]{IEEEtran}
\IEEEoverridecommandlockouts

\usepackage{cite}
\usepackage{amsmath,amssymb,amsfonts}
\usepackage{algorithmic}
\usepackage{graphicx}
\usepackage{textcomp}
\usepackage{balance}
\usepackage{xcolor}
\usepackage[utf8]{inputenc}

\usepackage{mathtools}
\usepackage{amsthm} % 用于theorem的字体

\newtheorem{theorem}{Theorem}
\newtheorem{lemma}{Lemma}
\newtheorem{proposition}{Proposition}

\newtheorem{remark}{Remark}
\newtheorem{definition}{Definition}
\newtheorem{assumption}{Assumption}

\usepackage{amssymb}    
\usepackage{mathtools}

\usepackage{algorithm}
\usepackage{algorithmic}

\def\BibTeX{{\rm B\kern-.05em{\sc i\kern-.025em b}\kern-.08em
    T\kern-.1667em\lower.7ex\hbox{E}\kern-.125emX}}
\usepackage{lipsum}

\title{\LARGE \bf 
Incentive-Based Load Curtailment with Limited Information:\\A Bilevel Zeroth-Order Learning Approach
}

\author{Zhisen Jiang, Florian Dörfler, and Saverio Bolognani% <-this % stops a space
\thanks{ 
The authors are with the Automatic Control Laboratory, ETH Zurich, 8092 Zurich, Switzerland. Email: \{zhijiang, dorfler, bsaverio\}@ethz.ch.}%
\thanks{This work was supported by the Swiss Federal Office of Energy (grant SI/502734 MAESTRO) and by the Swiss National Science Foundation under the NCCR Automation (grant agreement 51NF40\textunderscore225155).}
}

\begin{document}
\begingroup
\allowdisplaybreaks

\maketitle

\begin{abstract}
    Incentive-based load curtailment unlocks critical demand-side flexibility but is hindered by the limited knowledge of private user parameters and the inherent nonsmoothness of responses due to physical device constraints. 
    We address this via a constrained bilevel optimization framework and propose the Bi-ZOL (Bilevel Zeroth-Order Learning) algorithm.
    Unlike conventional black-box methods, Bi-ZOL exploits the bilevel structure to decompose the hypergradient, integrating the exact analytical information of the SO’s objective with a zeroth-order estimate of the unknown response sensitivity. 
    This structural decomposition-based learning method mathematically smoothes the nonsmooth response landscape and reduces hypergradient estimation error. 
    We provide theoretical convergence guarantees to an approximate stationary point and demonstrate through simulations that Bi-ZOL achieves near-optimal performance.
\end{abstract}

\begin{IEEEkeywords}
Load Curtailment, Bilevel Optimization, Zeroth-Order Estimation, Nonsmooth Optimization.
\end{IEEEkeywords}

%%%%%%%%%%%%%%%%%%%%%%%%%%%%%%%%%%%%%%%%%%%%%%%%%%%%%%%%%%%%%%%%%%%%%%%%%%%%%%%%
\section{Introduction}
% [Outline: Background (load control) -> Load control background]
The rapid growth of renewable energy resources and the increasing volatility of demand profiles pose significant challenges for System Operators (SOs) in maintaining power balance. 
Conventionally, when demand exceeds available generation, SOs maintain balance by ramping up or committing reserve generation.
However, relying on peaking units can be economically inefficient and is limited by ramping constraints.
Alternatively, SOs can activate demand-side flexibility by temporarily curtailing load from end-users.
In this context, demand-side flexibility, e.g., controllable load curtailment from end-users, has emerged as a key resource to support system-level power balance.

This paradigm is reflected in real-world power systems where verified demand-side load curtailment is treated as an operational resource.
For instance, the PJM Interconnection has established that demand resources can substitute for generation by providing identifiable capacity services when dispatched~\cite{PJM2025}. 
Similarly, in France, RTE has implemented mechanisms enabling demand-response participants to bid verified load reductions into electricity markets and balancing arrangements~\cite{RTE2018}. 
In California, CAISO has introduced market participation models under which demand-response resources can offer verified load reductions into wholesale markets~\cite{CAISO}.
Furthermore, policy reforms in Japan have explicitly institutionalized "negawatt trading," validating the procurement of demand reductions as a flexibility resource~\cite{Japan2017}.

While these examples demonstrate that demand curtailment can function as an operational substitute for upward generation, the practical question is how the SO reliably and efficiently achieves the desired curtailment.

Broadly, approaches range from direct operator control to incentive-based procurement that leverages voluntary participant responses.
% [Outline: Background (load control) -> Direct Load Control is Impractical]
While Direct Load Control (DLC) has been employed to manage loads, such centralized approaches face privacy concerns. 
They necessitate invasive access to end-users' private constraints and utility functions.
Consequently, there is a paradigm shift towards incentive-based load curtailment.
In this framework, the SO uses financial incentives to induce desired load reduction among self-interested end-users in a privacy-preserving and incentive-compatible manner.
Existing literature has explored various incentive forms.
Differentiated contracts that distinguish between automated and voluntary curtailment are designed based on user opportunity costs  \cite{Daniels2015}.
Risk-aware contract designs have been developed to ensure system reliability while respecting individual utility \cite{Mijatovic2017, Aid2019}.
Recently, dynamic pricing for demand response \cite{Xu2022} and price-based flexibility contracts with competitive guarantees \cite{Zhao2023} are proposed.
Moreover, applications of mechanism design provide the theoretical foundation for ensuring that these incentives remain robust against strategic behavior \cite{Abedrabboh2023}.

From the SO's perspective, the primary objective is to determine the optimal incentives that elicit sufficient response from end-users while minimizing the total payment for the curtailment \cite{Abrishambaf2018}.
However, designing optimal incentives is intrinsically a complex optimization problem characterized by limited knowledge of end-users' private information. 
This opacity necessitates a learning-based approach, where the SO optimizes incentives using observed feedback rather than end-users' private parameters. 
Nevertheless, a critical challenge arises because end-users are subject to strict physical limits (e.g., device capacities). 
When these limits are reached, the response naturally saturates, causing abrupt changes in how load reacts to the incentive. 
This saturation creates ``kinks'' in the response profile, which renders the underlying function mathematically nonsmooth.
This nonsmoothness exposes the limitations of existing learning methods. 
First-order feedback methods \cite{Adam2025}, which rely on response sensitivities estimation, may suffer from oscillation because the gradient of response function is undefined at saturation points. 
Conversely, standard zeroth-order (ZO) methods utilize smoothing schemes to bypass these differentiability issues \cite{maheshwari2024agnostic, Adam2025}.
While these methods are robust to nonsmoothness, they treat the problem as a complete black box. 
They fail to exploit the known structure of the SO's cost function, leading to significant suboptimality and computational inefficiency.

These challenges motivate us to explicitly formulate the incentive-based load curtailment as a constrained bilevel optimization problem, 
a framework that naturally captures the hierarchical structure while explicitly modeling the user-side decision-making process.
To solve this problem, we propose the Bi-ZOL (Bilevel Zeroth-Order Learning) algorithm. 
Distinguishing itself from standard black-box approaches that indiscriminately estimate the gradient of the entire objective function, Bi-ZOL exploits the bilevel structure to decompose the hypergradient via the chain rule. 
This decomposition reveals that the nonsmoothness stems solely from the response sensitivity term (the Jacobian of the lower-level solution). 
Consequently, Bi-ZOL focuses on learning this aggregate response sensitivity directly from end-user feedback via a zeroth-order gradient estimator, which effectively provides a smoothed approximation of the landscape. 
By isolating the learning target, our method incorporates the exact analytical gradient of the SO's known cost function, thereby enhancing optimality while rigorously handling the nonsmooth nature of user behaviors.

In this paper, we first formulate the incentive-based load curtailment as a bilevel optimization problem and derive the piecewise affine structure of the users' response, identifying it as the source of nonsmoothness in Section~\ref{sec:problem_formulation}.
In Section~\ref{sec:solution_algorithm}, we present our Bi-ZOL algorithm and provide theoretical convergence guarantees to an approximate stationary point.
Then, we illustrate the algorithm's performance through numerical simulations in Section~\ref{sec:numerical_experiments}.
Finally, we discuss the remaining open challenges in the Conclusions.

\section{Problem: Incentive-Based Load Curtailment}
\label{sec:problem_formulation}

In this section, we formulate incentive-based load curtailment as a bilevel optimization problem.
The SO, acting as the leader, determines and broadcasts \emph{nodal curtailment incentives} to procure demand reductions from self-interested end-users connected to each node. 
As followers, end-users react to these incentives by optimally adjusting their heterogeneous flexible loads subject to device limits.

Our mathematical formulation is similar to the load control framework in \cite{Adam2025}. 
In contrast to \cite{Adam2025}, which models the end-user response as a generic black-box mapping subject to high-level behavioral properties (e.g., monotonicity), we explicitly characterize the decision-making process as a lower-level optimization problem. 
This structural approach allows us to mathematically derive the specific nature of the nonsmoothness arising from device constraints.
For clarity, we simplify the load control formulation by focusing solely on active load reduction requirements and excluding the underlying grid power flow constraints.

\subsection{SO's upper-level problem (leader)}
\label{subsec:upper_level}
Let $\mathcal{N}$ denote the set of nodes, with $|\mathcal{N}|=N$. 
We define the following vectors in $\mathbb{R}^N$:
\begin{itemize}
    \item[$\boldsymbol{\lambda}$] nodal curtailment incentives, where $\lambda_i$ is the incentive broadcast at node $i$;
    \item[$\mathbf{R}$] aggregate nodal load reduction responses, where $R_i$ is the total reduction at node $i$ contributed by connected end-users;
    \item[$\mathbf{P}^{\mathrm{b}}$] baseline loads, where $P^{\mathrm{b}}_i$ is the baseline load of node $i$.
\end{itemize}

The SO aims to minimize total incentive payments while driving the total system load toward a target $P^{\mathrm{target}}$.
Given a price vector $\boldsymbol{\lambda}$ and response $\mathbf{R}$,
we define the global load mismatch as:
\begin{equation}
    E \coloneqq \mathbf{1}^\top\!\big(\mathbf{P}^{\mathrm{b}}-\mathbf{R}\big)\;-\;P^{\mathrm{target}}.
\label{eq:E}
\end{equation}
The SO solves the following optimization problem:
\begin{equation}
\min_{\boldsymbol{\lambda}\in \Lambda, \mathbf{R}} \;\varphi (\boldsymbol{\lambda,\mathbf{R}} )\;\coloneqq \;\boldsymbol{\lambda}^{\top}\mathbf{R}\;+\;\rho E^2,
\label{eq:upper_problem}
\end{equation}
where $\rho>0$ is a penalty parameter and $\Lambda$ is the admissible price set.

Note that $\varphi(\boldsymbol{\lambda}, \mathbf{R})$ is smooth and Lipschitz continuous on compact sets.
We adopt the following standard assumption regarding the constraints:
\begin{assumption} \label{ass:bounded_lambda}
    The set $\Lambda$ is nonempty, convex, compact, and bounded with diameter $D$, i.e., $\left\| \boldsymbol{\lambda}_1-\boldsymbol{\lambda}_2 \right\| \le D$ for any $\boldsymbol{\lambda}_1, \boldsymbol{\lambda}_2 \in \Lambda$.  
\end{assumption}
This assumption is readily satisfied in practice, typically taking the form of box constraints $\Lambda \coloneqq [\underline{\boldsymbol{\lambda}}, \overline{\boldsymbol{\lambda}}]$.

\subsection{End-users' lower-level problem (followers)}
\label{subsec:lower_level}
End-users are partitioned by nodes. 
Let $\mathcal{U}$ be the set of users and $\mathcal{U}_i\subseteq\mathcal{U}$ denote the subset of users connected to node $i$ (where $\{\mathcal{U}_i\}_{i\in\mathcal{N}}$ are disjoint).
Each user $u$ controls a set of flexible devices indexed by $\mathcal{K}_u$.
For each device $k\in\mathcal{K}_u$, let $r_{u,k}\in\mathbb{R}$ denote the load reduction, subject to the capacity constraint:
\begin{equation}
0\le r_{u,k}\le C_{u,k},
\label{eq:device_capacity}
\end{equation}
where $C_{u,k}>0$ represents the maximum load reduction for the device, which is determined by the device capacity.

Given the nodal incentive $\lambda_i$, each end-user $u\in\mathcal{U}_i$ minimizes their net cost (discomfort minus incentive payment).
The decision problem for user $u$ is formulated as:
\begin{subequations}\label{eq:user_lower}
\begin{align}
\min_{\{r_{u,k}\}_{k\in\mathcal{K}_u}}\;\;&
\sum_{k\in\mathcal{K}_u}\Big(\frac{1}{2\alpha_{u,k}}\,r_{u,k}^2-\lambda_i r_{u,k}\Big)
\label{eq:user_lower_obj}
\\
\text{s.t.}\;\;&
0\le r_{u,k}\le C_{u,k},\qquad \forall k\in\mathcal{K}_u,
\label{eq:user_lower_con}
\end{align}
\end{subequations}
where $\alpha_{u,k}>0$ denotes the device discomfort sensitivity.
Unlike \cite{Adam2025}, our formulation explicitly models the aggregation of heterogeneous devices.
The quadratic term captures the \emph{increasing marginal disutility} of load reduction. For example, while dimming lights slightly may cause negligible annoyance, curtailing HVAC usage during extreme weather incurs significantly higher discomfort per unit of energy reduced.

\begin{remark}
The proposed quadratic discomfort model \eqref{eq:user_lower_obj} implies that all devices begin reducing load as soon as $\lambda_i > 0$. 
This formulation can be trivially extended to include a linear discomfort term $\beta_{u,k} r_{u,k}$ (where $\beta_{u,k} > 0$), representing a minimum "activation cost" for device $k$.
Physically, a device only participates when the incentive $\lambda_i$ exceeds its specific discomfort threshold $\beta_{u,k}$.
This allows the model to capture sequential activation behaviors, where users prioritize ``cheaper'' reduction before committing to ``expensive'' devices as incentives increase.
\end{remark}

Let $\{r_{u,k}^\star(\lambda_i)\}_{k\in\mathcal{K}_u}$ denote the optimal solution to \eqref{eq:user_lower}.
The aggregate nodal response is then given by:
\begin{equation}
R_i(\lambda_i) \coloneqq \sum_{u\in\mathcal{U}_i}\sum_{k\in\mathcal{K}_u} r_{u,k}^\star(\lambda_i).
\label{eq:nodal_aggregate_response}
\end{equation}
We now characterize the properties of $\mathbf{R}$ to identify the source of nonsmoothness in the bilevel problem.

\begin{proposition}
\label{prop:response_function_property}
Fix a node $i\in\mathcal{N}$. The aggregate nodal response $R_i(\lambda_i)$ satisfies the following properties:
\begin{enumerate}
\item (\emph{Uniqueness}) 
 For every $\lambda_i\in\mathbb{R}$, the response $R_i(\lambda_i)$ is unique.
\item (\emph{Closed-form piecewise affine structure}) 
$R_i(\lambda_i)$ admits the closed-form expression:
\begin{equation}
R_i(\lambda_i)=\sum_{u\in\mathcal{U}_i}\sum_{k\in\mathcal{K}_u}\min\big\{ C_{u,k}, \max\{0, \alpha_{u,k}\lambda_i\} \big\}.
\label{eq:R_i_closed_form}
\end{equation}
Consequently, $R_i(\cdot)$ is piecewise affine with breakpoints contained in the set $\{0\}\cup\{C_{u,k}/\alpha_{u,k}:k\in\mathcal{K}_u, u\in\mathcal{U}_i\}$.
\item (\emph{Lipschitz Continuity}) 
$R_i(\cdot)$ is globally Lipschitz continuous with constant:
\begin{equation*}
    L_i \coloneqq \sum_{u\in\mathcal{U}_i}\sum_{k\in\mathcal{K}_u}\alpha_{u,k}.
\end{equation*}
\end{enumerate}
\end{proposition}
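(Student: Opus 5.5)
The plan is to exploit the fact that problem \eqref{eq:user_lower} is fully separable across devices: both the objective \eqref{eq:user_lower_obj} and the box constraints \eqref{eq:user_lower_con} split into independent scalar problems, one per device $k\in\mathcal{K}_u$. Each problem has the form
\begin{equation*}
\min_{0\le r\le C_{u,k}} \; \frac{1}{2\alpha_{u,k}} r^2 - \lambda_i r .
\end{equation*}
The proof then reduces to analysing this scalar problem and summing the results via \eqref{eq:nodal_aggregate_response}.

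For uniqueness (property 1), observe that each scalar objective is strictly convex, since its second derivative is $1/\alpha_{u,k}>0$. It is minimized over a nonempty compact interval, so a minimizer exists and is unique. Therefore each $r_{u,k}^\star(\lambda_i)$ is uniquely defined for every $\lambda_i\in\mathbb{R}$, and so is their finite sum $R_i(\lambda_i)$.

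For the closed form (property 2), the unconstrained minimizer is $\alpha_{u,k}\lambda_i$, obtained by setting the derivative to zero. A strictly convex one-dimensional quadratic restricted to an interval is minimized at the projection of its unconstrained minimizer onto that interval. To make this rigorous I would either check the KKT conditions directly in the three cases $\alpha_{u,k}\lambda_i\le 0$, $0<\alpha_{u,k}\lambda_i<C_{u,k}$ and $\alpha_{u,k}\lambda_i\ge C_{u,k}$, or use monotonicity of the objective on each side of its vertex. This gives
\begin{equation*}
r_{u,k}^\star(\lambda_i)=\min\{C_{u,k},\max\{0,\alpha_{u,k}\lambda_i\}\},
\end{equation*}
and summing yields \eqref{eq:R_i_closed_form}. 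Each summand is affine on each of the intervals $(-\infty,0]$, $[0,C_{u,k}/\alpha_{u,k}]$ and $[C_{u,k}/\alpha_{u,k},\infty)$, with slopes $0$, $\alpha_{u,k}$ and $0$ respectively. A finite sum of piecewise affine functions is piecewise affine, and its breakpoints lie in the union of the summands' breakpoints, which is $\{0\}\cup\{C_{u,k}/\alpha_{u,k}\}$.

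For Lipschitz continuity (property 3), note that the map $x\mapsto\min\{C,\max\{0,x\}\}$ is the projection onto $[0,C]$ and is therefore $1$-Lipschitz. Composing it with $\lambda\mapsto\alpha_{u,k}\lambda$ gives constant $\alpha_{u,k}$. The triangle inequality applied to the sum then gives the constant $L_i=\sum_{u}\sum_k\alpha_{u,k}$. Equivalently, $R_i$ is continuous and piecewise affine, and its slope on every piece is at most $L_i$.

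None of these steps is a real obstacle. The only point needing care is justifying the projection formula, which I would settle by the three-case KKT check, noting that $C_{u,k}>0$ makes the cases well-ordered.
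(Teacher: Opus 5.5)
Your proposal is correct and follows essentially the same route as the paper. Both arguments use strict convexity on a nonempty compact box to get a unique device-level optimizer, identify it from first-order optimality as the clipped value $\min\{C_{u,k},\max\{0,\alpha_{u,k}\lambda_i\}\}$, and then sum per-device Lipschitz constants $\alpha_{u,k}$ via the triangle inequality. Your justification of the per-device constant through nonexpansiveness of the projection onto $[0,C_{u,k}]$ is a slightly cleaner version of the paper's ``slope $0$ or $\alpha_{u,k}$ except at breakpoints'' argument.
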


\begin{IEEEproof}
    We start the proof with the device optimal solution $r_{u,k}^\star(\lambda_i)$ admitted by \eqref{eq:user_lower}.
    Since $\alpha_{u,k}>0$ for all $k$, the Hessian matrix is diagonal with strictly positive entries $\{\frac{1}{\alpha_{u,k}}\}_{k\in\mathcal{K}_u}$, the objective is strictly convex in $\{r_{u,k}\}$.
    Together with the nonempty compact constraints \eqref{eq:user_lower_con} and first order optimality condition, for any device $(u,k)$ and any $\lambda_i\in\mathbb{R}$, the problem \eqref{eq:user_lower} admits a unique optimizer given by
\begin{equation}
r_{u,k}^\star(\lambda_i)
=
\min\big\{ C_{u,k}, \max\{0, \alpha_{u,k}\lambda_i\} \big\}.
\label{eq:clip_closed_form}
\end{equation}
Summing these unique device responses over all users at node $i$ proves statements (1) and (2).
Regarding (3), $r_{u,k}^\star(\cdot)$ has slope either $0$ or $\alpha_{u,k}$ (except at two breakpoints),
so it is globally Lipschitz with constant $\alpha_{u,k}$. 
Therefore,
\begin{align*}
    |R_i(\lambda )-R_i(\lambda' )| & \le \sum_{u\in\mathcal{U}_i}\sum_{k\in \mathcal{K} _u}{|r_{u,k}^{*}(\lambda )}-r_{u,k}^{*}(\lambda')|
\\
& \le \sum_{u\in\mathcal{U}_i}\sum_{k\in \mathcal{K} _u}{\alpha _{u,k}|\lambda}-\lambda' |
\\
& =L_i|\lambda -\lambda' |.
\end{align*}

\end{IEEEproof}

To illustrate the piecewise affine responses in Proposition~\ref{prop:response_function_property}, 
Fig.~\ref{fig:response_schematic} depicts responses for a node with three connected end-users. 
The individual device responses exhibit saturation at their capacities $C_1$, $C_2$, $C_3$, 
and the aggregate $R_i$ is the sum with breakpoints at $\{C_{u,k}/\alpha_{u,k}\}$ as in~\eqref{eq:R_i_closed_form}.

\begin{figure}[htb]
    \centering
    \includegraphics[width=\columnwidth]{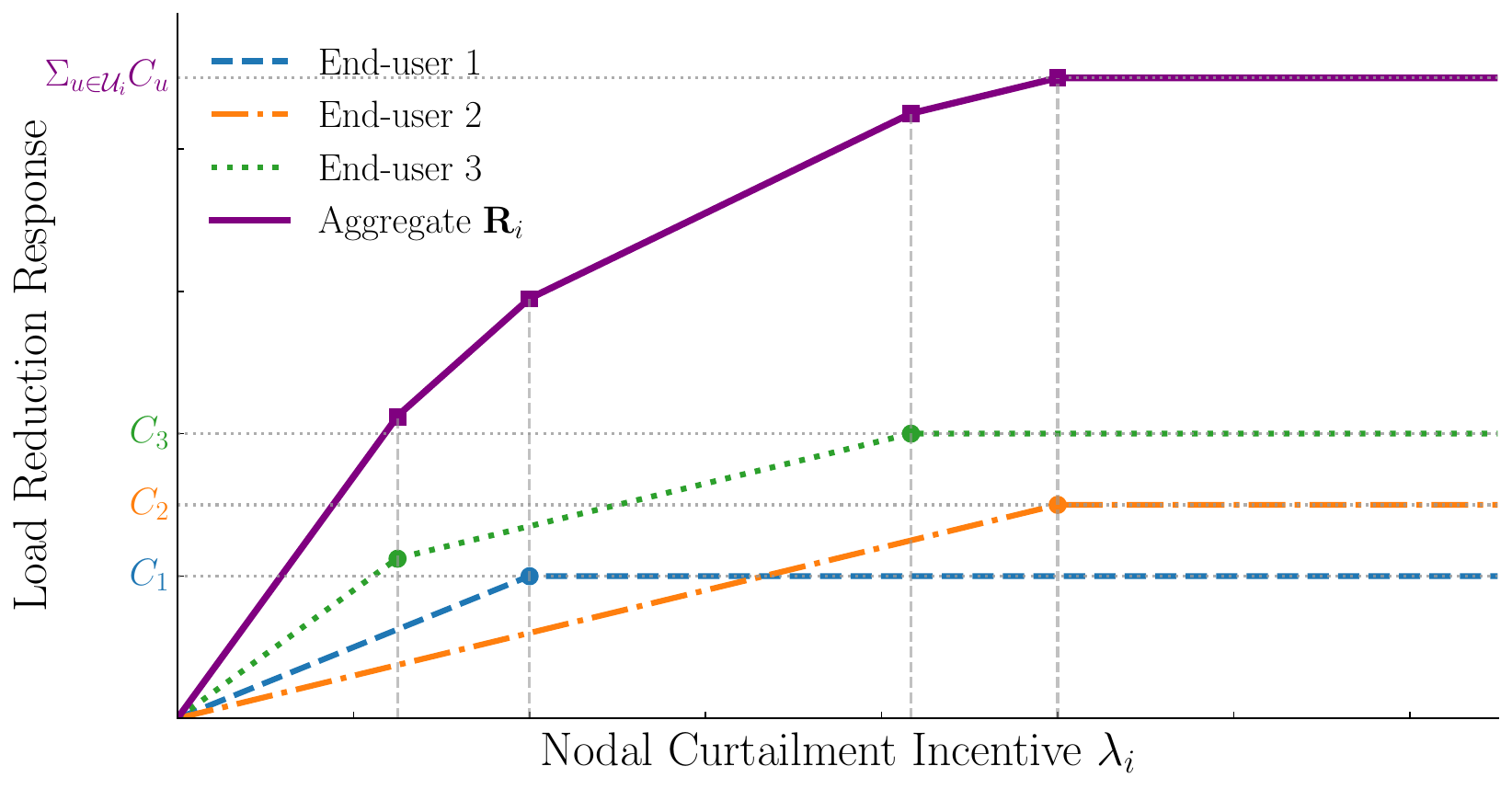}
    \caption{Responses for a node with three connected end-users: individual load reduction responses and aggregate nodal response $R_i$ as functions of the nodal incentive $\lambda_i$.}
    \label{fig:response_schematic}
\end{figure}

\begin{lemma}
    \label{lemma:R_vector_Lipschitz}
    The aggregate nodal response vector $\mathbf{R}(\boldsymbol{\lambda})$ is globally Lipschitz continuous with constant $L_R \coloneqq \max_{i \in \mathcal{N}} L_i$, and is bounded by $R_{\max} \coloneqq \sqrt{\sum_{i=1}^N (\sum_{u\in\mathcal{U}_i}\sum_{k\in\mathcal{K}_u} C_{u,k})^2}$.
\end{lemma}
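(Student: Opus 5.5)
The plan is to exploit that $\mathbf{R}(\boldsymbol{\lambda})$ is \emph{separable}: by \eqref{eq:nodal_aggregate_response}, the $i$-th component $R_i$ depends only on $\lambda_i$. Both claims then reduce to componentwise statements, which follow from Proposition~\ref{prop:response_function_property} and the closed form \eqref{eq:clip_closed_form}. All norms are Euclidean, consistent with Assumption~\ref{ass:bounded_lambda}.

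For the Lipschitz bound, fix $\boldsymbol{\lambda},\boldsymbol{\lambda}'\in\mathbb{R}^N$ and expand the squared norm componentwise. Statement (3) of Proposition~\ref{prop:response_function_property} gives
\begin{align*}
\|\mathbf{R}(\boldsymbol{\lambda})-\mathbf{R}(\boldsymbol{\lambda}')\|^2
&=\sum_{i=1}^N |R_i(\lambda_i)-R_i(\lambda_i')|^2\\
&\le \sum_{i=1}^N L_i^2|\lambda_i-\lambda_i'|^2 .
\end{align*}
Bounding each $L_i^2$ by $L_R^2=\max_i L_i^2$ and factoring it out leaves $L_R^2\|\boldsymbol{\lambda}-\boldsymbol{\lambda}'\|^2$. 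Taking square roots yields global Lipschitz continuity with constant $L_R$. The essential point is separability: there are no cross terms, so we obtain the maximum rather than a sum or a Frobenius-type constant.

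For the boundedness claim, I use \eqref{eq:clip_closed_form}. For every $\lambda_i\in\mathbb{R}$, each device response satisfies $0\le r^\star_{u,k}(\lambda_i)\le C_{u,k}$, because it is the clip of $\alpha_{u,k}\lambda_i$ to the interval $[0,C_{u,k}]$. Summing over $u\in\mathcal{U}_i$ and $k\in\mathcal{K}_u$ gives
\begin{equation*}
0\le R_i(\lambda_i)\le \sum_{u\in\mathcal{U}_i}\sum_{k\in\mathcal{K}_u}C_{u,k}.
\end{equation*}
Since $R_i\ge 0$, squaring preserves this inequality. Summing over $i$ and taking the square root gives $\|\mathbf{R}(\boldsymbol{\lambda})\|\le R_{\max}$ for every $\boldsymbol{\lambda}$, and in particular on $\Lambda$.

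There is no real obstacle here. The only step that needs care is the use of separability in the first part, so that the constant is $\max_i L_i$; nonnegativity of $R_i$ is what makes the squaring step valid in the second part.
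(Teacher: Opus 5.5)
Your proof is correct and follows the paper's argument essentially line for line. Both expand the squared norm componentwise using separability and Proposition~\ref{prop:response_function_property}(3), bound each $L_i$ by $L_R$, and deduce boundedness from $0\le R_i\le\sum_{u,k}C_{u,k}$, where your remark that nonnegativity justifies the squaring step spells out a detail the paper leaves implicit.
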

\begin{proof}
    Using the vector components:
    \begin{align*}
        \|\mathbf{R}(\boldsymbol{\lambda})-\mathbf{R}(\boldsymbol{\lambda}')\|
         & = \sqrt{\sum_{i=1}^N |R_i(\lambda_i)-R_i(\lambda_i')|^2}
        \\
        & \le \sqrt{\sum_{i=1}^N L_i^2 |\lambda_i-\lambda_i'|^2}
        \\
        & \le \sqrt{\left(\max_{j} L_j^2\right) \sum_{i=1}^N |\lambda_i-\lambda_i'|^2}
        \\
        & = L_R \|\boldsymbol{\lambda}-\boldsymbol{\lambda}'\|_2.
    \end{align*}
    Boundedness follows directly from the fact that $0 \le R_i \le \sum_{u,k} C_{u,k}$.
\end{proof}

\subsection{Bilevel Formulation}
\label{subsec:bilevel}

We now combine the SO’s upper-level problem \eqref{eq:upper_problem} and the end-users’ lower-level problem \eqref{eq:user_lower}.
The resulting leader--follower interaction can be written as the following bilevel program:
\begin{subequations}
\label{eq:full_bilevel}
\begin{align}
\min_{\boldsymbol{\lambda}\in\Lambda, \mathbf{R}}\;\; &
\varphi (\boldsymbol{\lambda},\mathbf{R})\;\coloneqq \;\boldsymbol{\lambda}^{\top}\mathbf{R}\;+\;\rho E^2
\label{eq:bilevel_upper_obj}
\\
\text{s.t.}\;\;
&
R_i = \sum_{u\in\mathcal{U}_i}\sum_{k\in\mathcal{K}_u} r_{u,k}^\star(\lambda_i), \quad \forall i\in\mathcal{N}, \notag\\
&
E = \mathbf{1}^\top\!\big(\mathbf{P}^{\mathrm{b}}-\mathbf{R}\big)\;-\; P^{\mathrm{target}},\notag\\
&
\forall u\in\mathcal{U}: \notag\\
&
r_{u,k}^\star(\lambda_i)=\arg\min_{k\in\mathcal{K}_u}
\left\{
    \sum_{k\in\mathcal{K}_u}\Big(\frac{1}{2\alpha_{u,k}}\,r_{u,k}^2-\lambda_i r_{u,k}\Big)
\right\} \label{eq:equilibrium_constraint_obj}\\
&
\hspace{2cm} \text{s.t.}\ 0\le r_{u,k}\le C_{u,k}, \quad \forall k\in\mathcal{K}_u.
\label{eq:equilibrium_constraint_con}
\end{align}
\end{subequations}

The bilevel formulation \eqref{eq:full_bilevel} captures the SO’s incentive-based load curtailment under limited information.
The SO can observe the aggregate nodal response $\mathbf{R}$ after broadcasting the price vector $\boldsymbol{\lambda}$.
However, it does not know the closed-form mapping from the incentive vector to the aggregate response because it is induced by private end-user discomfort parameters and device constraints.

The general interaction between the SO and the end-users is illustrated in Fig.~\ref{fig:bilevel_interaction}.

\begin{figure}[htb]
    \centering
    \includegraphics[width=\columnwidth]{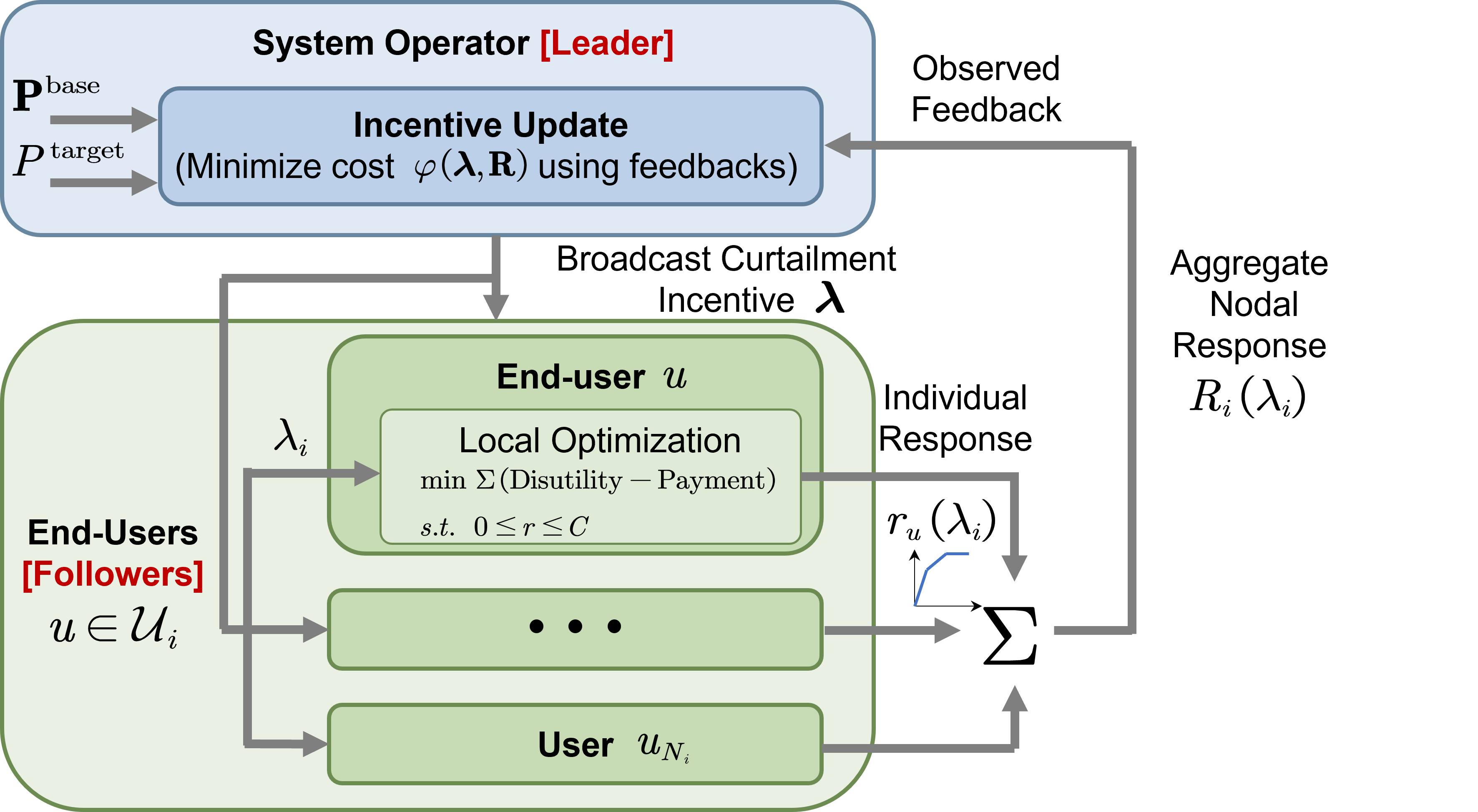}
    \caption{Incentive-based load curtailment architecture.}
    \label{fig:bilevel_interaction}
\end{figure}

%%%%%%%%%%%%%%%%%%%%%%%%%%%%%%%%%%%%%%%%%%%%%%%%%%%%%%%%%%%%%%%%%%%%%%%%%%%%%%%%
\section{Solution: Bilevel Zeroth-Order Learning}
\label{sec:solution_algorithm}
In this section, we propose the Bilevel Zeroth-Order Learning (Bi-ZOL) algorithm to solve the bilevel problem \eqref{eq:full_bilevel}.
The core principle is to learn the sensitivity of the aggregate nodal response to incentives from observed feedback using a zeroth-order estimator.
From a practical perspective, this learning process circumvents the SO's limited knowledge regarding end-user parameters.
Mathematically, this corresponds to a randomized smoothing of the nonsmooth response landscape, which is essential for ensuring algorithmic convergence.
Furthermore, unlike standard black-box methods, Bi-ZOL fully utilizes the SO's available analytical information of the upper-level cost function to enhance the precision of the hypergradient estimate.

\subsection{Approximate Stationary Point}
\label{subsec:approximate_stationary_point}
To formally introduce the algorithm, we first define the stationary point condition we aim to satisfy.
We view the bilevel problem \eqref{eq:full_bilevel} as a general constrained nonsmooth nonconvex optimization problem:
\begin{equation}
    \label{eq:nonsmooth_optimization}
    \min_{\boldsymbol{\lambda}\in\Lambda}\;\; \tilde{\varphi}\left( \boldsymbol{\lambda } \right) \coloneqq \varphi (\boldsymbol{\lambda },\mathbf{R}(\boldsymbol{\lambda}))
    =\boldsymbol{\lambda}^{\top}\mathbf{R}(\boldsymbol{\lambda})\;+\;\rho E\left( \boldsymbol{\lambda } \right) ^2,  
\end{equation}
where the equilibrium constraints \eqref{eq:equilibrium_constraint_obj}-\eqref{eq:equilibrium_constraint_con} are implicitly captured by the single-valued mapping $\mathbf{R}(\boldsymbol{\lambda})$.

Wherever $\tilde{\varphi}(\cdot)$ is differentiable, the hypergradient is given by the chain rule:
\begin{equation}
    \label{eq:hypergradient}
    \nabla \tilde{\varphi}( \boldsymbol{\lambda } ) = \nabla_1 \varphi( \boldsymbol{\lambda },\mathbf{R} ) + \mathbf{J} \mathbf{R}( \boldsymbol{\lambda} )^\top \nabla_2 \varphi( \boldsymbol{\lambda },\mathbf{R} ),
\end{equation}
where the Jacobian matrix $\mathbf{J} \mathbf{R}( \boldsymbol{\lambda} )$ captures the sensitivity of the aggregate response to incentives.
However, as discussed in Section \ref{subsec:lower_level}, $\mathbf{R}(\boldsymbol{\lambda})$ is nonsmooth, possessing numerous points where the Jacobian is undefined. 
This renders standard gradient-based methods invalid.

To address this, in our algorithm, we propose to approximate the ill-defined hypergradient by replacing the $\mathbf{J} \mathbf{R}( \boldsymbol{\lambda} )$ with the approximate Jacobian $\mathbf{J}\mathbf{R}_{\delta}(\boldsymbol{\lambda})$, where $\mathbf{R}_{\delta}(\boldsymbol{\lambda})$ is the smoothed response function by randomized smoothing defined as:
\begin{equation}
    \label{eq:smoothed_response}
    \mathbf{R}_{\delta}(\boldsymbol{\lambda}) = \mathbb{E}_{\mathbf{u}\sim \mathrm{Unif}(\mathbb{B})} \left[ \mathbf{R}(\boldsymbol{\lambda} + \delta \mathbf{u}) \right],
\end{equation}
where $\mathrm{Unif}(\mathbb{B})$ is the uniform distribution over the closed unit ball and $\delta > 0$ is the smoothing radius.
Consequently, we have our approximate hypergradient:
\begin{equation}
    \label{eq:approximate_hypergradient}
    \nabla \tilde{\varphi}_{\delta}( \boldsymbol{\lambda } ) = \nabla_1 \varphi( \boldsymbol{\lambda },\mathbf{R} ) + \mathbf{J}\mathbf{R}_{\delta}(\boldsymbol{\lambda})^\top \nabla_2 \varphi( \boldsymbol{\lambda },\mathbf{R} ).
\end{equation}
Crucially, unlike fully black-box approaches, our formulation retains the exact values of the partial derivatives $\nabla_1 \varphi$ and $\nabla_2 \varphi$.
By isolating the zeroth-order estimation solely to the unknown response sensitivity, \eqref{eq:approximate_hypergradient} significantly reduces estimation error and represents the true descent direction more faithfully.

Note that $\nabla \tilde{\varphi}_{\delta}$ is defined everywhere, and it serves as a proxy for the ill-defined hypergradient. 
Since the true objective violates the standard Lipschitz gradient assumption, we cannot rely on the smooth case first-order optimality conditions ($\nabla\tilde{\varphi}(\boldsymbol{\lambda}^\star)=0$).
Moreover, \eqref{eq:nonsmooth_optimization} is constrained.
Thus, we define the $(\delta, \epsilon)$-Frank-Wolfe Stationary Point (FWSP) to characterize stationarity for the constrained, nonsmooth landscape.
\begin{definition}
    \label{def:CFWSP}
    A point $\boldsymbol{\lambda} \in \Lambda$ is a $(\delta, \epsilon)$-Frank-Wolfe Stationary Point of problem \eqref{eq:nonsmooth_optimization} if:
\begin{equation}
    \underset{\mathbf{z}\in \Lambda}{\max}\left< \mathbf{z}-\boldsymbol{\lambda },-\nabla \tilde{\varphi}_{\delta}(\boldsymbol{\lambda }) \right> \le \epsilon.
\end{equation}
\end{definition}
This definition relaxes the standard stationarity condition in nonsmooth optimization by allowing for a smoothing error $\delta$ and a convergence tolerance $\epsilon$, which is consistent with the literature on zeroth-order optimization for nonsmooth functions \cite{Liu2024zeroth}.

\begin{remark}
The rationale for this stationarity definition is grounded in nonsmooth analysis. 
For a nonsmooth function, standard gradients are ill-defined.
Instead, optimality is typically characterized via the Clarke subdifferential $\partial^C \tilde{\varphi}(\boldsymbol{\lambda})$.
Our definition is inspired by the concept of the Goldstein subdifferential, a computationally tractable relaxation of $\partial^C \tilde{\varphi}(\boldsymbol{\lambda})$ constructed by sampling gradients within a $\delta$-neighborhood.
Our $\nabla \tilde{\varphi}_{\delta}$ serves as a reliable proxy for the Goldstein subdifferential.
As the smoothing radius $\delta \to 0$, any limit point of the sequence of $(\delta, \epsilon)$-FWSPs satisfies the necessary optimality condition defined by $\partial^C \tilde{\varphi}(\boldsymbol{\lambda})$.
\end{remark}

\subsection{Bilevel Zeroth-Order Learning Algorithm}
\label{subsec:algorithm}

The proposed Bi-ZOL algorithm employs a Frank-Wolfe (FW) update scheme and operates by alternating between learning the response sensitivity via a two-point estimator and updating the incentives along a descent direction.

We leverage the fact that the SO has full analytical knowledge of the upper-level objective $\varphi$.
At iteration $k$, given the current pair $(\boldsymbol{\lambda}_k, \mathbf{R}_k)$, the partial gradients $\nabla_1\varphi$ and $\nabla_2\varphi$ are computed exactly:
\begin{equation}
    \label{eq:partials_closedform}
    \nabla_1\varphi(\boldsymbol{\lambda},\mathbf{R})=\mathbf{R}_k,
    \quad
    \nabla_2\varphi(\boldsymbol{\lambda},\mathbf{R})=\boldsymbol{\lambda}_k-2\rho E(\mathbf{R}_k)\mathbf{1}.
\end{equation}
The remaining challenge is to estimate the smoothed Jacobian $\mathbf{J}\mathbf{R}_{\delta}(\boldsymbol{\lambda}_k)$.
It is a known result in zeroth-order optimization that the gradient of a function smoothed over the unit ball can be estimated using samples from the unit sphere.
Accordingly, we generate a random direction $\mathbf{w}\sim \mathrm{Unif}(\mathbb{S}^{N-1})$ and utilize a two-point estimator:
\begin{equation}
    \label{eq:jac_estimator}
    \widehat{\mathbf{J}}{\mathbf{R}}(\boldsymbol{\lambda}_k)
    \;=\;
    \frac{N}{2\delta}
    \Big(\mathbf{R}(\boldsymbol{\lambda}_k+\delta \mathbf{w})-\mathbf{R}(\boldsymbol{\lambda}_k-\delta \mathbf{w})\Big)\mathbf{w}^\top.
\end{equation}
This estimator serves two critical functions: 
(1) It allows the SO to learn the aggregate behavior of end-users without accessing their private parameters; 
(2) It effectively estimates the Jacobian of the \emph{smoothed} response $\mathbf{R}_\delta$, thereby addressing the nonsmoothness issues of the original function $\mathbf{R}$.

Combining \eqref{eq:partials_closedform} and \eqref{eq:jac_estimator}, Bi-ZOL constructs the hypergradient estimate:
\begin{equation}
\label{eq:hypergrad_est}
\widehat{\mathbf{g}}_k
\;=\;
\nabla_1\varphi(\boldsymbol{\lambda}_k,\mathbf{R}_k)
+
\widehat{\mathbf{J}}{\mathbf{R}}(\boldsymbol{\lambda}_k)^\top
\nabla_2\varphi(\boldsymbol{\lambda}_k,\mathbf{R}_k).
\end{equation}

With $\widehat{\mathbf{g}}_k$, the Frank-Wolfe linear oracle computes the descent direction:
\begin{equation}
\label{eq:fw_oracle}
\mathbf{z}_k\in\arg\max_{\mathbf{z}\in\Lambda}\ \langle \mathbf{z},-\widehat{\mathbf{g}}_k\rangle,
\end{equation}
followed by the update step:
\begin{equation}
\label{eq:fw_update}
\boldsymbol{\lambda}_{k+1}=\boldsymbol{\lambda}_k+\gamma\big(\mathbf{z}_k-\boldsymbol{\lambda}_k\big),
\end{equation}
where $\gamma\in(0,1]$ is the stepsize. 
Since $\Lambda$ is convex, the update \eqref{eq:fw_update} ensures that the new incentives remain feasible without requiring projection.

The complete procedure is summarized in Algorithm \ref{alg:bizol}.

\begin{remark}
In a practical implementation, the iterative updates of $\boldsymbol{\lambda}$ and the probing of $\mathbf{R}(\boldsymbol{\lambda} \pm \delta \mathbf{w})$ occur within a digital negotiation phase (e.g., via Home Energy Management Systems) prior to the physical load adjustment.
The SO exchanges incentives with the HEMS agents to estimate sensitivities without physically altering the load consumption.
This avoids the latency and frequent physical actuation.
Moreover, it is possible to adapt the algorithm to use a one-point estimator \cite{Zhang2022OnePointEsti,Zhiyu_modelfree}. 
A one-point approach would require only a single broadcast per iteration, simplifying the communication overhead. We leave the exploration of one-point variants for real-time operations to future work.
\end{remark}

\begin{algorithm}[t]
\caption{Bilevel Zeroth-Order Learning (Bi-ZOL)}
\label{alg:bizol}
\begin{algorithmic}[1]
\STATE \textbf{Input:} Stepsize $\gamma$; smoothing radius $\delta>0$.
\STATE \textbf{Initialize:} Choose $\boldsymbol{\lambda}_0\in\Lambda$.
\FOR{$k=0,1,2,\ldots$}
    \STATE \textbf{Query response:} Broadcast $\boldsymbol{\lambda}_k$, observe $\mathbf{R}_k=\mathbf{R}(\boldsymbol{\lambda}_k)$.
    \STATE \textbf{Compute mismatch:} $E_k=\mathbf{1}^\top\!\big(\mathbf{P}^{\mathrm{b}}-\mathbf{R}_k\big)-P^{\mathrm{target}}$.
        \STATE \textbf{Compute partials:} $\nabla_1\varphi=\mathbf{R}_k$; $\nabla_2\varphi=\boldsymbol{\lambda}_k-2\rho E_k\mathbf{1}$.
    \STATE \textbf{ZO Jacobian estimation:} 
    \STATE \quad Sample direction $\mathbf{w}\sim \mathrm{Unif}(\mathbb{S}^{N-1})$.
    \STATE \quad Broadcast $\boldsymbol{\lambda}_k+\delta\mathbf{w}$, observe $\mathbf{R}^{+}=\mathbf{R}(\boldsymbol{\lambda}_k+\delta\mathbf{w})$.
    \STATE \quad Broadcast $\boldsymbol{\lambda}_k-\delta\mathbf{w}$, observe $\mathbf{R}^{-}=\mathbf{R}(\boldsymbol{\lambda}_k-\delta\mathbf{w})$.
    \STATE \quad Estimate $\widehat{\mathbf{J}}{\mathbf{R}}(\boldsymbol{\lambda}_k)\leftarrow \dfrac{N}{2\delta}\big(\mathbf{R}^{+}-\mathbf{R}^{-}\big)\mathbf{w}^\top$.
    \STATE \textbf{Hypergradient estimate:} 
    \STATE \quad $\widehat{\mathbf{g}}_k\leftarrow \nabla_1\varphi + \widehat{\mathbf{J}}{\mathbf{R}}(\boldsymbol{\lambda}_k)^\top \nabla_2\varphi$.
    \STATE \textbf{FW linear oracle:} $\mathbf{z}_k\in\arg\max_{\mathbf{z}\in\Lambda}\ \langle \mathbf{z},-\widehat{\mathbf{g}}_k\rangle$.
    \STATE \textbf{FW update:} $\boldsymbol{\lambda}_{k+1}\leftarrow \boldsymbol{\lambda}_k+\gamma\big(\mathbf{z}_k-\boldsymbol{\lambda}_k\big)$.
\ENDFOR
\end{algorithmic}
\end{algorithm}

\subsection{Algorithm Convergence}
\label{subsec:algorithm_convergence}
In this section, we establish the convergence of the Bi-ZOL algorithm to a $(\delta, \epsilon)$-FWSP. 
The analysis relies on the smoothness properties of the smoothed response function $\mathbf{R}_{\delta}$ and the ancillary function $\bar{\varphi}_{\delta}(\boldsymbol{\lambda}) \coloneqq \varphi(\boldsymbol{\lambda}, \mathbf{R}_{\delta}(\boldsymbol{\lambda}))$.
By quantifying the bias introduced by smoothing and the variance of the zeroth-order estimator, we derive the following convergence guarantee.

\begin{theorem}
    \label{thm:convergence}
    With Assumption~\ref{ass:bounded_lambda}, the Bi-ZOL algorithm (Algorithm~\ref{alg:bizol}) with constant stepsize $\gamma \in (0,1]$ guarantees the following bound on the expected Frank-Wolfe gap after $T$ iterations:
    \begin{equation}
    \label{eq:convergence_bound}
    \frac{1}{T}\sum_{k=0}^{T-1} \mathbb{E} \left[ \mathcal{G}_{\delta}(\boldsymbol{\lambda}_k) \right]
    \le
    \frac{\Delta_0}{\gamma T} 
    + 
    \frac{L_{\nabla \bar{\varphi}_{\delta}} D^2}{2} \gamma
    + 
    C_{\mathrm{bias}} D \delta
    + 
    C_{\mathrm{noise}} D,
    \end{equation}
    where $\mathcal{G}_{\delta}(\boldsymbol{\lambda}) \coloneqq \max_{\mathbf{z}\in\Lambda} \langle \mathbf{z}-\boldsymbol{\lambda}, -\nabla \tilde{\varphi}_{\delta}(\boldsymbol{\lambda}) \rangle$ is the FW gap for the approximate problem.
    The constants are defined as:
    \begin{itemize}
        \item $\Delta_0 \coloneqq \bar{\varphi}_{\delta}(\boldsymbol{\lambda}_0) - \min_{\boldsymbol{\lambda}}\bar{\varphi}_{\delta}(\boldsymbol{\lambda})$,
        \item $L_{\nabla \bar{\varphi}_{\delta}} \coloneqq L_R(2+2\rho N L_R) + \frac{c N L_R}{\delta}(\Lambda_{\max} + 2\rho\sqrt{N}(|c_0|+\sqrt{N}R_{\max}))$,
        \item $C_{\mathrm{bias}} \coloneqq (1+2\rho N L_R)L_R$,
        \item $C_{\mathrm{noise}} \coloneqq 4 M_{\varphi 2} (2\pi)^{1/4} N L_R$, with $M_{\varphi 2} \coloneqq \Lambda_{\max} + 2\rho\sqrt{N}(|c_0|+\sqrt{N}R_{\max})$.
    \end{itemize}
\end{theorem}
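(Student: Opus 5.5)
We will run a standard descent-lemma analysis of nonconvex Frank-Wolfe on the smooth surrogate $\bar{\varphi}_{\delta}(\boldsymbol{\lambda})=\varphi(\boldsymbol{\lambda},\mathbf{R}_{\delta}(\boldsymbol{\lambda}))$. The potential function is $\bar{\varphi}_{\delta}$, and the FW gap $\mathcal{G}_{\delta}$ will be connected to it through two error terms: a smoothing bias and an estimator noise term.

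\textbf{Step 1: smoothness of the surrogate.} By Lemma~\ref{lemma:R_vector_Lipschitz}, $\mathbf{R}$ is $L_R$-Lipschitz and bounded by $R_{\max}$. Two standard randomized-smoothing facts then hold.
\begin{itemize}
\item $\mathbf{R}_{\delta}$ is $L_R$-Lipschitz.
\item $\mathbf{J}\mathbf{R}_{\delta}$ is Lipschitz with constant $cNL_R/\delta$.
\end{itemize}
Moreover, $\|\mathbf{R}_{\delta}-\mathbf{R}\|\le L_R\delta$. From \eqref{eq:partials_closedform}, $\nabla\varphi$ is Lipschitz in $(\boldsymbol{\lambda},\mathbf{R})$, with $\nabla_2\varphi$ having $\mathbf{R}$-Lipschitz constant $2\rho N$. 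We also have the uniform bound $\|\nabla_2\varphi\|\le M_{\varphi 2}$ on $\Lambda\times\{\|\mathbf{R}\|\le R_{\max}\}$, using $\Lambda_{\max}$ and $|E|\le |c_0|+\sqrt{N}R_{\max}$ with $c_0=\mathbf{1}^\top\mathbf{P}^{\mathrm{b}}-P^{\mathrm{target}}$.

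Apply the product rule to $\nabla\bar{\varphi}_{\delta}=\nabla_1\varphi(\boldsymbol{\lambda},\mathbf{R}_\delta)+\mathbf{J}\mathbf{R}_\delta^\top\nabla_2\varphi(\boldsymbol{\lambda},\mathbf{R}_\delta)$. Bounding each piece with the above constants gives $L_{\nabla\bar{\varphi}_\delta}$. The descent lemma combined with \eqref{eq:fw_update} then yields
\begin{equation*}
\bar{\varphi}_{\delta}(\boldsymbol{\lambda}_{k+1})\le\bar{\varphi}_{\delta}(\boldsymbol{\lambda}_k)+\gamma\langle\nabla\bar{\varphi}_{\delta}(\boldsymbol{\lambda}_k),\mathbf{z}_k-\boldsymbol{\lambda}_k\rangle+\tfrac{L_{\nabla\bar{\varphi}_\delta}}{2}\gamma^2D^2 .
\end{equation*}

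\textbf{Step 2: relating the step to the gap.} Let $\mathbf{v}_k$ be the FW vertex for $\nabla\tilde{\varphi}_{\delta}(\boldsymbol{\lambda}_k)$. By optimality of $\mathbf{z}_k$ for $\widehat{\mathbf{g}}_k$, we have $\langle\widehat{\mathbf{g}}_k,\mathbf{z}_k-\boldsymbol{\lambda}_k\rangle\le\langle\widehat{\mathbf{g}}_k,\mathbf{v}_k-\boldsymbol{\lambda}_k\rangle$. Adding and subtracting terms gives
\begin{equation*}
\langle\nabla\bar{\varphi}_\delta,\mathbf{z}_k-\boldsymbol{\lambda}_k\rangle\le-\mathcal{G}_\delta(\boldsymbol{\lambda}_k)+D\|\nabla\bar{\varphi}_\delta-\nabla\tilde{\varphi}_\delta\|+2D\|\widehat{\mathbf{g}}_k-\nabla\tilde{\varphi}_\delta\|.
\end{equation*}
The first error term is the bias. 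The two gradients differ only because $\varphi$'s partials are evaluated at $\mathbf{R}_\delta$ versus $\mathbf{R}$. With $\|\mathbf{J}\mathbf{R}_\delta\|\le L_R$ and $\|\mathbf{R}_\delta-\mathbf{R}\|\le L_R\delta$, it is at most $(1+2\rho NL_R)L_R\delta=C_{\mathrm{bias}}\delta$.

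\textbf{Step 3: the noise term, which is the main obstacle.} Conditioned on $\boldsymbol{\lambda}_k$, the two-point estimator \eqref{eq:jac_estimator} is unbiased for $\mathbf{J}\mathbf{R}_\delta(\boldsymbol{\lambda}_k)$. This follows from the sphere/ball identity, by symmetry of $\mathbf{w}$. However, $\mathcal{G}$ involves a maximum over the oracle, so the error enters through $\mathbb{E}\|\widehat{\mathbf{g}}_k-\nabla\tilde{\varphi}_\delta\|$ rather than through a mean-zero inner product, and we must bound that magnitude directly. We write
\begin{equation*}
\|\widehat{\mathbf{g}}_k-\nabla\tilde{\varphi}_\delta\|\le M_{\varphi2}\,\|\widehat{\mathbf{J}}\mathbf{R}-\mathbf{J}\mathbf{R}_\delta\|\le M_{\varphi2}\big(\|\widehat{\mathbf{J}}\mathbf{R}\|+\|\mathbf{J}\mathbf{R}_\delta\|\big).
\end{equation*}
The naive bound $\|\widehat{\mathbf{J}}\mathbf{R}\|\le NL_R$ already gives $O(M_{\varphi2}NL_R)$. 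The stated constant $4(2\pi)^{1/4}$ suggests the sharper route instead. That route uses concentration of Lipschitz functions on the sphere: $\mathbf{w}\mapsto\mathbf{R}(\boldsymbol{\lambda}+\delta\mathbf{w})$ is $\delta L_R$-Lipschitz, so the difference concentrates with sub-Gaussian tails. One would then integrate the tail to bound the second moment and apply Jensen, which is where the $(2\pi)^{1/4}$ arises. I would try the naive bound first to secure the structure, then refine the constant via the concentration argument. The key property to keep either way is that this term carries no $\delta^{-1}$ factor, since the $1/\delta$ cancels against the $\delta$-Lipschitz perturbation.

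\textbf{Step 4: telescoping.} Take total expectations, sum over $k=0,\dots,T-1$, use $\bar{\varphi}_\delta(\boldsymbol{\lambda}_T)\ge\min\bar{\varphi}_\delta$, and divide by $\gamma T$. This yields \eqref{eq:convergence_bound}, after absorbing the factor $2$ in front of the noise term into $C_{\mathrm{noise}}$.
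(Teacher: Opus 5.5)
Your Steps 1, 2 and 4 match the paper's argument: the descent lemma for $\bar{\varphi}_\delta$, splitting off the FW gap, the bias bound via $\|\mathbf R_\delta-\mathbf R\|\le\delta L_R$ and $\|\mathbf J\mathbf R_\delta\|\le L_R$, and telescoping. The genuine difference is the noise term.

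The paper merges the two oracle-error pieces into one, $\langle\nabla\tilde{\varphi}_\delta(\boldsymbol\lambda_k)-\widehat{\mathbf g}_k,\mathbf z_k-\hat{\mathbf z}_k\rangle\le D\|\nabla\tilde{\varphi}_\delta(\boldsymbol\lambda_k)-\widehat{\mathbf g}_k\|$, using $\langle\widehat{\mathbf g}_k,\mathbf z_k-\hat{\mathbf z}_k\rangle\le 0$. Its coefficient is therefore $D$ rather than your $2D$. It then bounds the expectation using Jensen, unbiasedness ($\mathbb E\|X-\mathbb E X\|_F^2\le\mathbb E\|X\|_F^2$), and the row-wise concentration bound $16\sqrt{2\pi}NL_i^2$ from Lin et al. That last bound is where $4(2\pi)^{1/4}$ comes from.

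Your deterministic route is more elementary, since it needs neither unbiasedness nor the cited lemma. Contrary to your expectation, it is also sharper here. $\widehat{\mathbf J}\mathbf R$ is rank one, so $\|\widehat{\mathbf J}\mathbf R\|=\|\widehat{\mathbf J}\mathbf R\|_F=\frac{N}{2\delta}\|\mathbf R(\boldsymbol\lambda_k+\delta\mathbf w)-\mathbf R(\boldsymbol\lambda_k-\delta\mathbf w)\|\le NL_R$ deterministically. The paper's bound is $4(2\pi)^{1/4}NL_R\approx 6.3\,NL_R$. Concentration does beat the naive $N^2L_i^2$ for a single scalar row. But because $\mathbf R$ is separable, its vector Lipschitz constant is $\max_i L_i$, so bounding the whole matrix at once already gives $N^2L_R^2$, and the row-wise gain disappears.

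What you must add is an explicit check that your noise term fits the stated constant. Your route gives $2(N+1)M_{\varphi 2}L_R D$, and $2(N+1)\le 4N<4(2\pi)^{1/4}N$ for $N\ge 1$. ``Absorbing the factor 2 into $C_{\mathrm{noise}}$'' is not a valid step by itself, because $C_{\mathrm{noise}}$ is a fixed explicit constant. On the concentration route you planned to refine to, the factor $2$ plus the extra $\|\mathbf J\mathbf R_\delta\|$ from your triangle inequality would exceed $C_{\mathrm{noise}}D$. You would then need the paper's merging step and its unbiasedness argument to land on the stated bound.
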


The proof of Theorem \ref{thm:convergence}, along with the supporting lemmas regarding the properties of the estimator and the ancillary function, are provided in Appendix.

\begin{remark}
Note that we use the average of the expected Frank-Wolfe gap as the convergence measure, as considered in \cite{Zhiyu_modelfree}. 
This is because it equals to $\mathbb{E} \left[ \max_{\mathbf{z}\in \Lambda}\left< -\nabla \tilde{\varphi}_{\delta}\left( \boldsymbol{\lambda}_R \right) ,\mathbf{z}-\boldsymbol{\lambda}_R \right> \right]$ where $\boldsymbol{\lambda}_R$ is uniformly sampled from $\left\{ \boldsymbol{\lambda}_k \right\} _{k=0}^{T-1}$.  
\end{remark}

%%%%%%%%%%%%%%%%%%%%%%%%%%%%%%%%%%%%%%%%%%%%%%%%%%%%%%%%%%%%%%%%%%%%%%%%%%%%%%%%
\section{Numerical Experiments}
\label{sec:numerical_experiments}
In this section, we evaluate the performance of the proposed Bi-ZOL algorithm to solve the incentive-based load curtailment problem on a simulated network. 

\subsection{Simulation Setup}
\label{subsec:sim_setup}
The simulations are conducted on a 3-node network populated with 8 end-users randomly connected to the nodes. 
Each end-user manages 4 distinct flexible devices capable of providing load reduction.

The nodal curtailment incentives are constrained within the range $\Lambda = [0, 5]^3$ to ensure economic feasibility.
The end-users are modeled as rational agents minimizing a quadratic cost function as defined in \eqref{eq:user_lower}.
To capture population heterogeneity, the discomfort sensitivity coefficient $\alpha_{u,k}$ for each device is sampled uniformly from the interval $[0.5, 3.0]$.
Device capacities $C_{u,k}$ are randomly assigned to reflect diverse appliance ratings.

We implement the Bi-ZOL algorithm with a constant step size $\gamma = 10^{-3}$ and a smoothing radius $\delta = 10^{-3}$, values determined via empirical tuning.
To establish a ground truth for optimality, we leverage the structural properties of the lower-level problem.
Since the device response function \eqref{eq:clip_closed_form} is piecewise affine, the bilevel optimization problem can be exactly reformulated as a Mixed-Integer Linear Program (MILP) by introducing binary variables to model the saturation states of each device.
The global optimal solution $\boldsymbol{\lambda}^*$ and corresponding cost $\tilde{\varphi}(\boldsymbol{\lambda}^*)$ is obtained by solving this MILP using the Gurobi optimizer.

\subsection{Simulation Results}
\label{subsec:sim_results}
The convergence behavior of the algorithm and comparison with the global optimum is illustrated in Fig. \ref{fig:convergence_error}.
Fig. \ref{fig:convergence_error} shows the evolution of the objective function value. 
Bi-ZOL exhibits a steady descent and stabilizes in the neighborhood of the local optimum, validating the theoretical convergence guarantee to the $(\delta, \epsilon)$-FWSP.
Despite treating the user response as a black box, Bi-ZOL achieves a final objective value of $40.51$, which is remarkably close to the global optimum of $38.34$.
The relative optimality gap is approximately $5.66\%$, demonstrating that the proposed zeroth-order estimator effectively captures the descent direction even in the presence of nonsmooth response functions.

\begin{figure}[htb]
    \centering
    \includegraphics[width=\columnwidth]{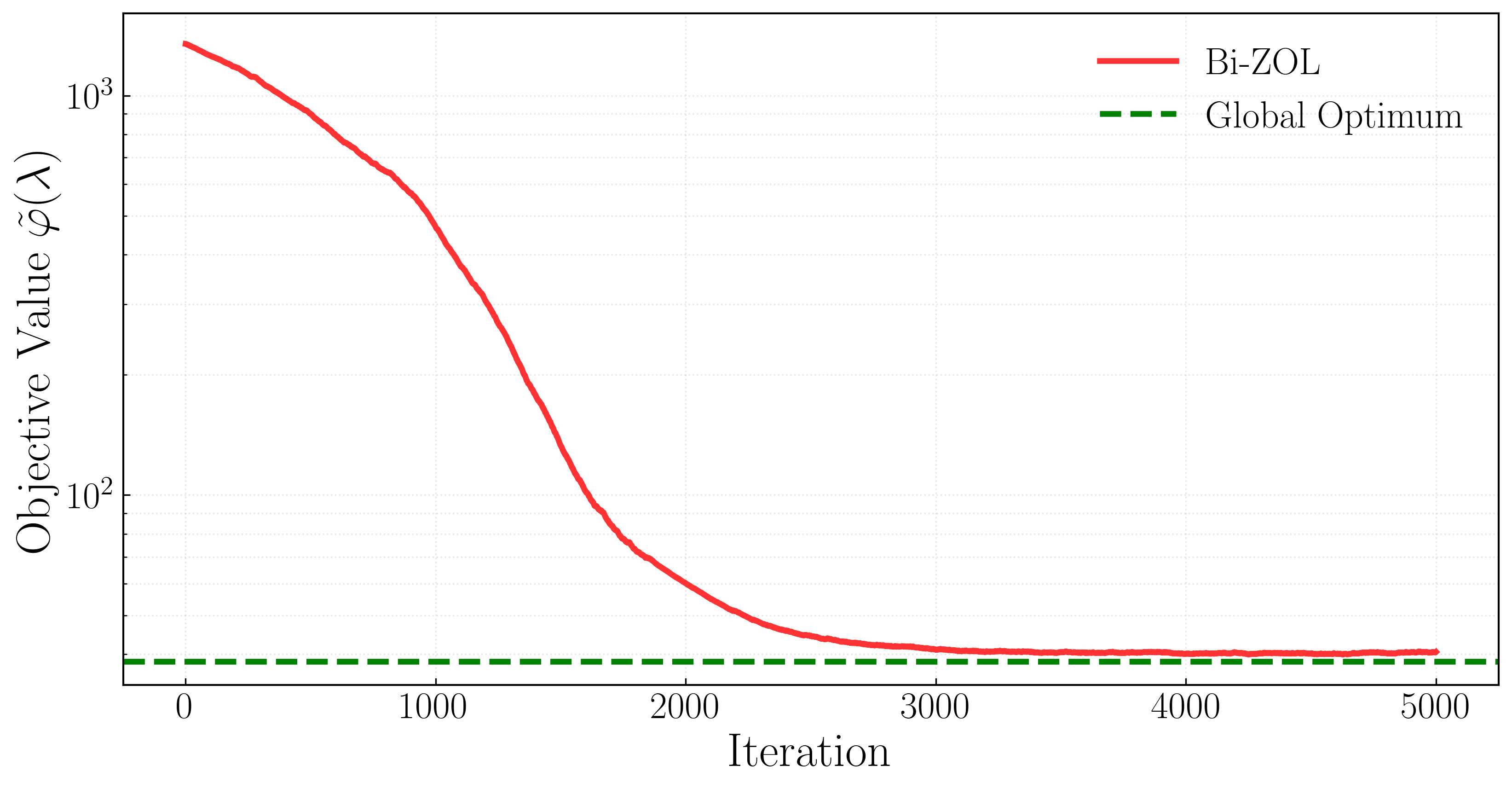}
    \caption{Convergence performance of Bi-ZOL. The objective value approaches the global optimum .}
    \label{fig:convergence_error}
\end{figure}

We further analyze the incentives and the corresponding end-users' load reduction responses.
Fig. \ref{fig:price_trajectory} illustrates the convergence trajectories of the incentives $\boldsymbol{\lambda}$ across three nodes. 
Throughout the iterations, the algorithm encounters nonsmooth points a total of 614 times, as indicated by the markers on each component's curve. The stable convergence, despite these frequent nonsmooth encounters, demonstrates that the smoothing technique within our Bi-ZOL algorithm effectively addresses the nonsmooth issues.

In Bi-ZOL iterations, $\lambda_1$ and $\lambda_3$ almost track the optimal prices, while $\lambda_2$ is a bit higher.
The reason is twofold.
First, we actually deal with a nonsmooth nonconvex optimization, in which global optimality is not guaranteed.
Our method only guarantees convergence to a $(\delta, \epsilon)$-FWSP, which is probably close to a local optimum.
Second, even though we select a most exact approximation for the hypergradient, the approximation error for the smoothing is still non-negligible.

\begin{figure}[htb]
    \centering
    \includegraphics[width=\columnwidth]{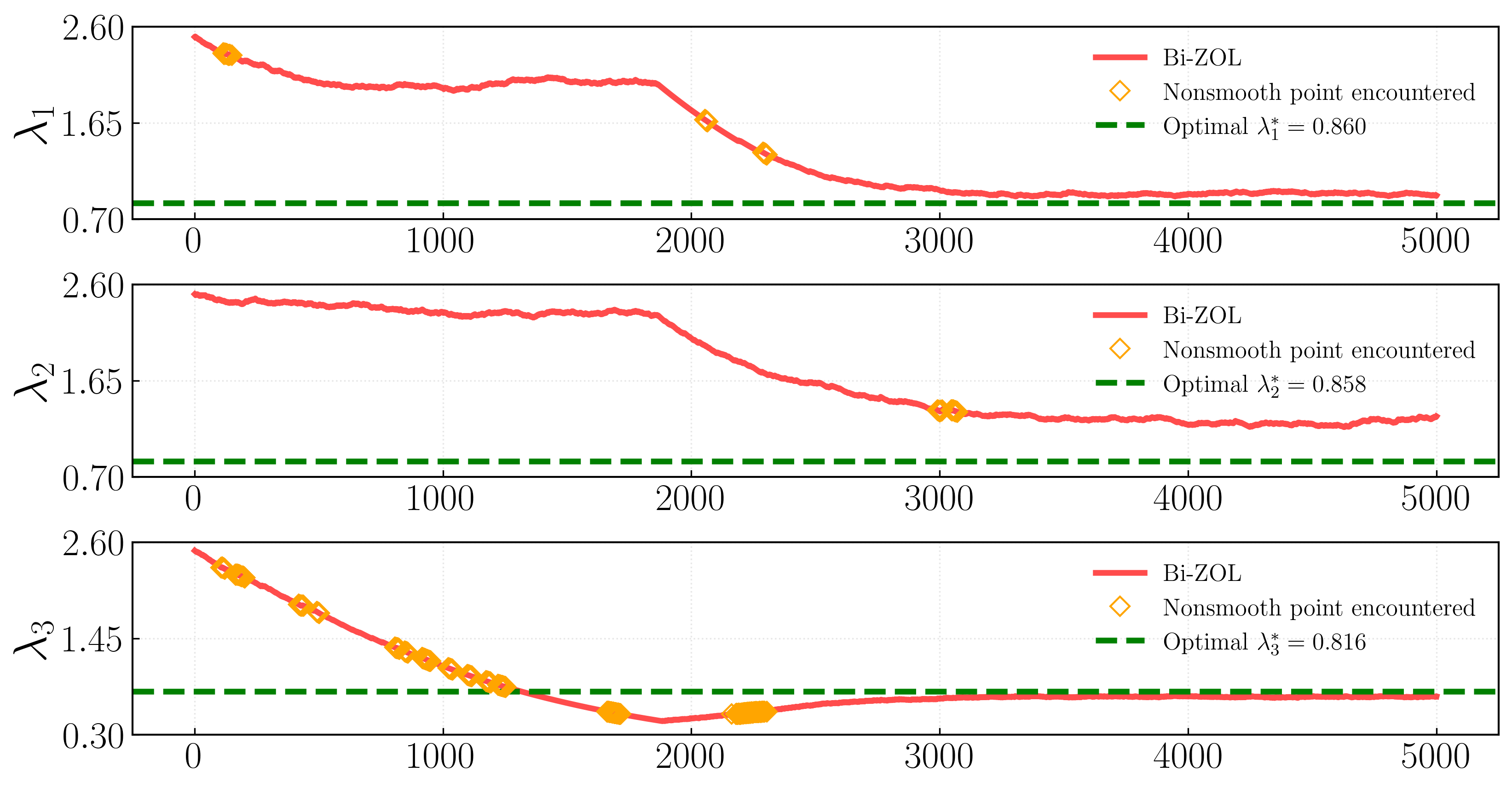}
    \caption{Incentive Trajectories. The yellow diamonds mark instances where the algorithm encounters nonsmooth points.
    Two nodal incentives among three nodes converge close to the optimal solution. 
    On the other hand, the incentive of the second node is a bit higher, which is probably due to the learning error and the local optimum.}
    \label{fig:price_trajectory}
\end{figure}
Fig. \ref{fig:response_trajectory} illustrates the load reduction response of the end-users.
The responses vary significantly across users, confirming the heterogeneity of the population.
Notably, at the converged state, four devices belonging to end-users 1, 3, and 5 reach their load reduction limits. This implies that the algorithm ultimately converges to a nonsmooth point.
Crucially, all end-users' responses are stable after iterations, which means our method is able to learn the end-users' responses and provide a stable incentive signal to control the loads.

\begin{figure}[htb]
    \centering
    \includegraphics[width=\columnwidth]{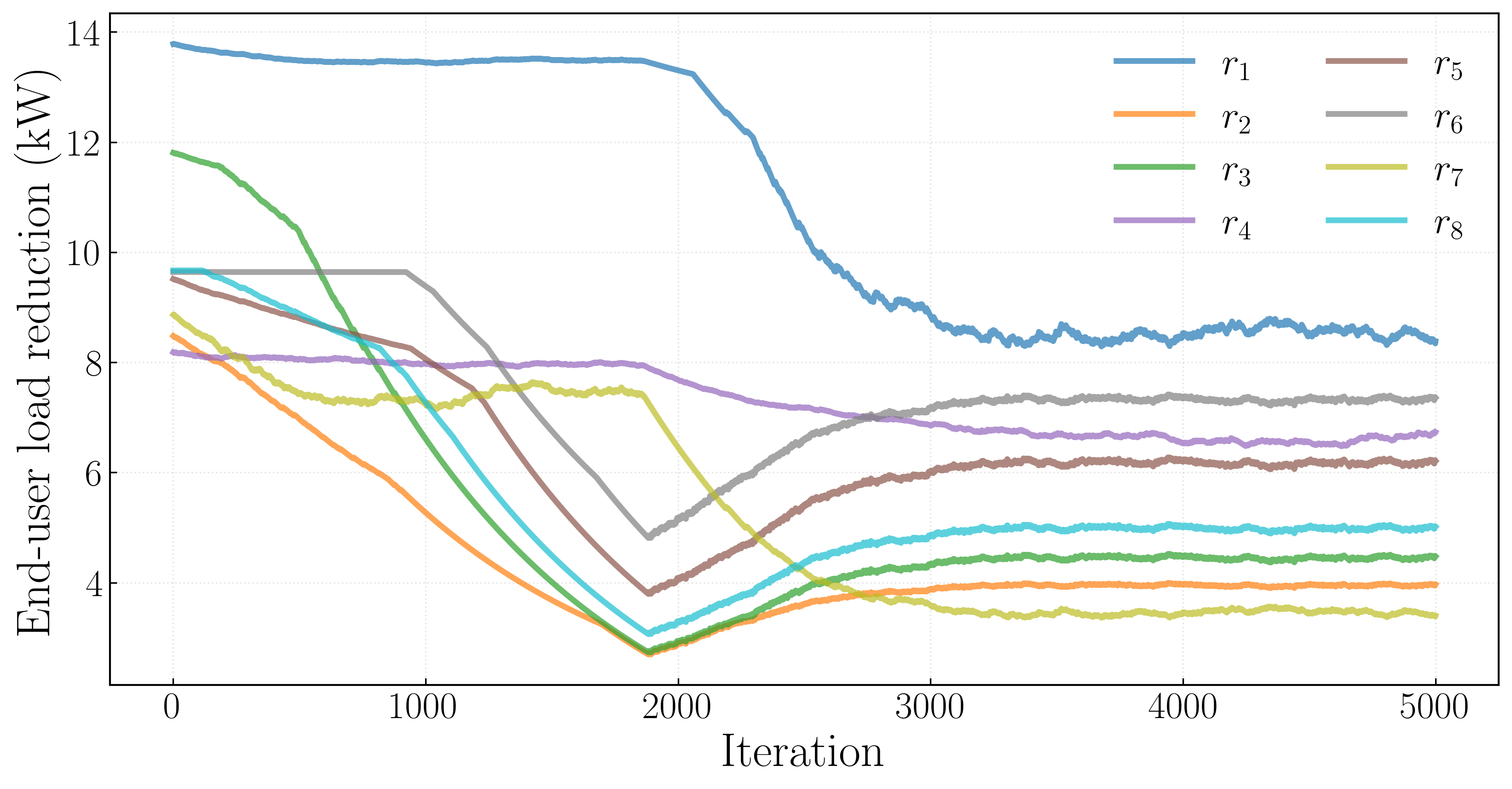}
    \caption{End-user Response Trajectories. The response of heterogeneous users stabilizes after iterations.}
    \label{fig:response_trajectory}
\end{figure}

\section{Conclusion}
\label{sec:conclusion}
This paper presents a bilevel optimization framework for incentive-based load curtailment.
By explicitly modeling the decision-making process of end-users subject to device capacity constraints, we identified that standard gradient-based methods fail due to undefined sensitivities at saturation points, while standard zeroth-order methods suffer from inefficiency by ignoring known system structures.
To bridge this gap, we proposed the Bi-ZOL algorithm, which synergizes analytical model information with learning response sensitivities.
Utilizing a structural hypergradient decomposition and a two-point estimator for sensitivity, Bi-ZOL rigorously handles the nonsmooth points in aggregate responses.
Theoretical analysis confirms that Bi-ZOL converges to a $(\delta, \epsilon)$-Frank-Wolfe stationary point under mild assumptions.
Furthermore, numerical experiments validate that our approach stabilizes the incentive near the global optimum.
Future work will extend this framework to incorporate grid constraints, exploring how voltage limits affect the learning of response sensitivities.
Further research could also investigate methods that reduce active probing, for instance by employing one-point gradient estimators or leveraging historical feedback to estimate sensitivities.
Additionally, investigating the tracking performance of Bi-ZOL in time-varying environments remains a promising direction for enhancing real-time grid operations.

% \section{AI Usage Disclosure}

\appendix 
\label{app:convergence}
% \subsection{Proof of Lemma \ref{lemma:jac_estimator}}
% \label{sec:appendix}
In this appendix, we provide the detailed proofs for the convergence of the Bi-ZOL algorithm. We use $\parallel\cdot\parallel$ to denote the $l_2$ norm for vectors and the spectral norm for matrices.

\subsection{Properties of the Zeroth-Order Estimator}
We begin by characterizing the statistical properties of the zeroth-order Jacobian estimator defined in \eqref{eq:jac_estimator}.

\begin{lemma}
    \label{lemma:jac_estimator}
    The two-point estimator \eqref{eq:jac_estimator} is an unbiased estimator of the smoothed Jacobian and satisfies the following variance bound:
    \begin{equation}
        \label{eq:jac_estimator_property}
        \begin{aligned}
        \mathbb{E}\!\left[\widehat{\mathbf{J}}{\mathbf{R}}\mid \boldsymbol{\lambda}\right]
        &= \mathbf{J}\mathbf{R}_{\delta}(\boldsymbol{\lambda}), \\
        \mathbb{E}\!\left[ \big\|\widehat{\mathbf{J}}{\mathbf{R}} \big\|^{2}\mid \boldsymbol{\lambda}\right]
        &\le 16\sqrt{2\pi}\,N^2 L_{R}^{2}.
        \end{aligned}
    \end{equation}
    \end{lemma}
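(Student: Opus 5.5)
The proof has two parts: unbiasedness via the standard sphere--ball identity, and the second moment via concentration of Lipschitz functions on the sphere. The constant $\sqrt{2\pi}$ in the statement indicates the latter.

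\textbf{Unbiasedness.} The key fact is the divergence-theorem identity for the smoothed function. For a Lipschitz (hence a.e.\ differentiable) scalar function $f$ with $f_\delta(\boldsymbol{\lambda})=\mathbb{E}_{\mathbf{u}\sim\mathrm{Unif}(\mathbb{B})}[f(\boldsymbol{\lambda}+\delta\mathbf{u})]$, it states that $\nabla f_\delta(\boldsymbol{\lambda})=\frac{N}{\delta}\mathbb{E}_{\mathbf{w}\sim\mathrm{Unif}(\mathbb{S}^{N-1})}[f(\boldsymbol{\lambda}+\delta\mathbf{w})\mathbf{w}]$. This holds by Stokes' theorem, using the ratio of surface area to volume of the unit ball, $N$.
\begin{enumerate}
\item Apply the identity row by row to each component $R_i$. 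These are Lipschitz by Lemma~\ref{lemma:R_vector_Lipschitz}, so each $R_{\delta,i}$ is differentiable and the identity holds. This gives $\mathbf{J}\mathbf{R}_\delta(\boldsymbol{\lambda})=\frac{N}{\delta}\mathbb{E}[\mathbf{R}(\boldsymbol{\lambda}+\delta\mathbf{w})\mathbf{w}^\top]$.
\item Since $\mathbf{w}$ and $-\mathbf{w}$ have the same distribution, $\mathbb{E}[\mathbf{R}(\boldsymbol{\lambda}-\delta\mathbf{w})\mathbf{w}^\top]=-\mathbb{E}[\mathbf{R}(\boldsymbol{\lambda}+\delta\mathbf{w})\mathbf{w}^\top]$.
\item Averaging the two expressions yields exactly the two-point form \eqref{eq:jac_estimator}, conditioned on $\boldsymbol{\lambda}$.
\end{enumerate}

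\textbf{Second moment.} The estimator is rank one, so its spectral norm is $\|\widehat{\mathbf{J}}\mathbf{R}\|=\frac{N}{2\delta}\|\mathbf{R}(\boldsymbol{\lambda}+\delta\mathbf{w})-\mathbf{R}(\boldsymbol{\lambda}-\delta\mathbf{w})\|\,\|\mathbf{w}\|$, with $\|\mathbf{w}\|=1$.

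The crude Lipschitz bound $\|\mathbf{R}^+-\mathbf{R}^-\|\le 2\delta L_R$ gives $\|\widehat{\mathbf{J}}\mathbf{R}\|^2\le N^2L_R^2$ deterministically. This already implies the stated bound, since $16\sqrt{2\pi}>1$.

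I nonetheless expect the authors follow the Shamir-style route, which yields their constant. Define $h(\mathbf{w})=\mathbf{R}(\boldsymbol{\lambda}+\delta\mathbf{w})$, and let $\boldsymbol{\mu}=\mathbb{E}[h(\mathbf{w})]$ (over $\mathbf{w}\sim\mathrm{Unif}(\mathbb{S}^{N-1})$). Then
\[
\|\mathbf{R}^+-\mathbf{R}^-\|^2\le 2\|h(\mathbf{w})-\boldsymbol{\mu}\|^2+2\|h(-\mathbf{w})-\boldsymbol{\mu}\|^2 .
\]
Each term is controlled by concentration on the sphere, because $h$ is $\delta L_R$-Lipschitz in $\mathbf{w}$. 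Lévy's inequality gives $\Pr(|g-\mathbb{E}g|>t)\le 2\sqrt{2\pi}\,e^{-Nt^2/(8\delta^2L_R^2)}$ for each $L$-Lipschitz scalar $g$. Integrating the tail then gives $\mathbb{E}\|h-\boldsymbol{\mu}\|^2\lesssim \sqrt{2\pi}\,\delta^2L_R^2/N$ times a numerical factor. Combining with the prefactor $N^2/(4\delta^2)$ leads to $c\sqrt{2\pi}N L_R^2$, which is at most $16\sqrt{2\pi}N^2L_R^2$.

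\textbf{Main obstacle.} The only delicate point is the vector-valued concentration: the tail bound is naturally scalar. I would avoid a fiddly vector tail argument by applying the scalar bound to the norm function $g(\mathbf{w})=\|h(\mathbf{w})-\boldsymbol{\mu}\|$, which is also $\delta L_R$-Lipschitz. This requires handling $\mathbb{E}g\ne 0$ via $\mathbb{E}g^2\le 2(\mathbb{E}g)^2+2\,\mathrm{Var}(g)$ and bounding $\mathbb{E}g$ crudely. Alternatively, I could bound component-wise and sum over the $N$ components, accepting an extra factor $N$; this is why the final constant has $N^2$.

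If the constant bookkeeping fails, the deterministic bound $N^2L_R^2$ is a rigorous fallback, and it proves the lemma outright.
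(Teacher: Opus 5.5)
Your unbiasedness argument is the paper's. The paper cites the sphere--ball identity from \cite{Lin2022} row by row; you additionally spell out the symmetrization $\mathbf{w}\mapsto-\mathbf{w}$ that yields the two-point form.

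For the second moment you take a genuinely different and stronger route. The paper bounds $\|\widehat{\mathbf{J}}\mathbf{R}\|^2\le\|\widehat{\mathbf{J}}\mathbf{R}\|_F^2=\sum_i\|\widehat{\mathbf{J}}\mathbf{R}_i\|^2$. It then applies the concentration-based scalar bound \cite[Lemma~D.1]{Lin2022} to each row, getting $16\sqrt{2\pi}NL_i^2$ per row and $16\sqrt{2\pi}N\sum_iL_i^2$ in total. In that row-by-row route, concentration is what keeps the total at order $N^2$; the trivial per-row bound $N^2L_i^2$ would only give $N^3L_R^2$.

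You instead use that the estimator is rank one, so $\|\widehat{\mathbf{J}}\mathbf{R}\|=\|\widehat{\mathbf{J}}\mathbf{R}\|_F=\frac{N}{2\delta}\|\mathbf{R}^{+}-\mathbf{R}^{-}\|$. The vector Lipschitz constant of Lemma~\ref{lemma:R_vector_Lipschitz} then gives $\|\widehat{\mathbf{J}}\mathbf{R}\|\le NL_R$ almost surely. This buys several things:
\begin{itemize}
\item It is rigorous, self-contained and holds pathwise.
\item It also covers the Frobenius-norm bound that steps (s.3)--(s.4) in the proof of Theorem~\ref{thm:convergence} actually invoke.
\item It is tight. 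If every device is unsaturated on $[\lambda_i-\delta,\lambda_i+\delta]$ and all $L_i=L_R$, the estimator is identically $NL_R\mathbf{w}\mathbf{w}^\top$.
\item It sharpens the lemma's constant by $16\sqrt{2\pi}\approx 40$, and $C_{\mathrm{noise}}$ by $4(2\pi)^{1/4}\approx 6.3$.
\end{itemize}
The paper's route buys modularity, plus an intermediate bound in terms of $\sum_iL_i^2$ rather than $NL_R^2$. Your argument recovers even that without the factor $16\sqrt{2\pi}$. Since $R_i$ depends only on $\lambda_i$, you have $|R_i(\lambda_i+\delta w_i)-R_i(\lambda_i-\delta w_i)|\le 2\delta L_i|w_i|$, and with $\mathbb{E}[w_i^2]=1/N$ this gives $\mathbb{E}\|\widehat{\mathbf{J}}\mathbf{R}\|_F^2\le N\sum_iL_i^2$.

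One caution: the same tight example shows that your speculative ``Shamir-style'' claim of a $c\sqrt{2\pi}NL_R^2$ bound is false. For vector-valued $h$, the quantity $\mathbb{E}\|h(\mathbf{w})-\boldsymbol{\mu}\|^2$ sums $N$ component variances and can equal $\delta^2L_R^2$, not $O(\delta^2L_R^2/N)$. Your proof does not rely on it, so drop that paragraph and present the deterministic bound as the argument.
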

    \begin{proof}
    For each $i\in\{1,\dots,N\}$, the $i$-th row of the Jacobian matrix $\widehat{\mathbf{J}}{\mathbf{R}}$ corresponds to the transposed vector estimator for the gradient of the $i$-th component function $R_i$:
    \begin{equation*}
        \label{eq:JR_i}
        \widehat{\mathbf{J}}{\mathbf{R}_i}
        \;\coloneqq\;
        \frac{N}{2\delta}\Big(R_i(\boldsymbol{\lambda}+\delta \mathbf{w})-R_i(\boldsymbol{\lambda}-\delta \mathbf{w})\Big)\mathbf{w}^\top
        \in\mathbb{R}^{1 \times N}.
    \end{equation*}        
    Consider the randomized smoothing of the scalar function $R_i$ over the unit ball:
    \begin{equation*}
        R_{i,\delta}(\boldsymbol{\lambda})
        \;\coloneqq\;
        \mathbb{E}_{\mathbf u\sim \mathrm{Unif}(\mathbb{B})}\big[R_i(\boldsymbol{\lambda}+\delta \mathbf u)\big].
    \end{equation*}
    From standard properties of zeroth-order estimators on the unit sphere in \cite{Lin2022}, the expectation of the column estimator is the gradient. Taking the transpose, we have:
    \begin{equation*}
        \mathbb{E}\big[\widehat{\mathbf{J}}{\mathbf{R}_i} \mid \boldsymbol{\lambda}\big]
        = \nabla R_{i,\delta}(\boldsymbol{\lambda})^\top.
    \end{equation*}
    Let $\mathbf R_\delta(\boldsymbol{\lambda})\coloneqq (R_{1,\delta}(\boldsymbol{\lambda}),\dots,R_{N,\delta}(\boldsymbol{\lambda}))^\top$. Stacking the expectations of the rows yields the Jacobian matrix:
    \begin{equation*}
        \mathbb{E}\big[\widehat{\mathbf{J}}{\mathbf{R}}\big]
        =
        \mathbf{J}\mathbf{R}_\delta(\boldsymbol{\lambda}).
    \end{equation*}
        
        To bound the second moment, we utilize the Frobenius norm decomposition:
        \begin{equation*}
        \big\|\widehat{\mathbf{J}}{\mathbf{R}}\big\|^{2}
        \leq \big\|\widehat{\mathbf{J}}{\mathbf{R}}\big\|_F^2
        =
        \sum_{i=1}^N \big\|\widehat{\mathbf{J}}{\mathbf{R}_i}\big\|^2.
        \end{equation*}
        Applying the bound for two-point estimators on Lipschitz functions from \cite[Lemma~D.1]{Lin2022}, we have for each $i$:
        \begin{equation*}
        \mathbb{E}\big[\|\widehat{\mathbf{J}}{\mathbf{R}_i}\|^2\mid \boldsymbol{\lambda}\big]
        \le 16\sqrt{2\pi}\,N\,L_i^2,
        \end{equation*}
        where $L_i$ is the Lipschitz constant of $R_i(\cdot)$. Summing over all $N$ rows yields:
        \begin{equation*}
        \mathbb{E}\big[\|\widehat{\mathbf{J}}{\mathbf{R}}\|^2\mid \boldsymbol{\lambda}\big]
        \le 16\sqrt{2\pi}\,N\sum_{i=1}^N L_i^2.
        \end{equation*}
        Using Lemma \ref{lemma:R_vector_Lipschitz}, we substitute $L_i \le L_R$ for all $i$, which implies $\sum_{i=1}^N L_i^2 \le N L_R^2$. Thus:
        \begin{equation*}
         \mathbb{E}\big[\|\widehat{\mathbf{J}}{\mathbf{R}}\|^2\mid \boldsymbol{\lambda}\big] \le 16\sqrt{2\pi}\,N^2 L_R^2,
        \end{equation*}
        which completes the proof.
    \end{proof}

\subsection{Smoothness Properties}
Next, we establish the Lipschitz continuity of the smoothed response and its Jacobian.

\begin{lemma}
    \label{lemma:R_delta_Lipschitz}
     $\mathbf{R}_{\delta}(\boldsymbol{\lambda})$ is Lipschitz continuous with Lipschitz constant $L_R$ and bounded by $R_{\max}$.
\end{lemma}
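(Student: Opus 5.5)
The plan is to transfer both properties of $\mathbf{R}$ in Lemma~\ref{lemma:R_vector_Lipschitz} directly through the expectation in \eqref{eq:smoothed_response}. Since randomized smoothing is an average of translates of $\mathbf{R}$, both the Lipschitz constant and the bound should be preserved.

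\emph{Lipschitz continuity.} Fix $\boldsymbol{\lambda},\boldsymbol{\lambda}'\in\mathbb{R}^N$. Using linearity of expectation,
\begin{equation*}
\mathbf{R}_\delta(\boldsymbol{\lambda})-\mathbf{R}_\delta(\boldsymbol{\lambda}')=\mathbb{E}_{\mathbf{u}}\big[\mathbf{R}(\boldsymbol{\lambda}+\delta\mathbf{u})-\mathbf{R}(\boldsymbol{\lambda}'+\delta\mathbf{u})\big].
\end{equation*}
I then apply Jensen's inequality for the convex function $\|\cdot\|$, i.e.\ $\|\mathbb{E}[X]\|\le\mathbb{E}\|X\|$. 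Inside the expectation, Lemma~\ref{lemma:R_vector_Lipschitz} gives the pointwise bound
\begin{equation*}
\|\mathbf{R}(\boldsymbol{\lambda}+\delta\mathbf{u})-\mathbf{R}(\boldsymbol{\lambda}'+\delta\mathbf{u})\|\le L_R\|\boldsymbol{\lambda}-\boldsymbol{\lambda}'\|,
\end{equation*}
because the shift $\delta\mathbf{u}$ cancels in the difference of arguments. The right-hand side is deterministic, so taking expectations yields the claim with the same constant $L_R$.

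\emph{Boundedness.} Jensen's inequality again gives
\begin{equation*}
\|\mathbf{R}_\delta(\boldsymbol{\lambda})\|\le\mathbb{E}\|\mathbf{R}(\boldsymbol{\lambda}+\delta\mathbf{u})\|\le R_{\max}.
\end{equation*}
The second inequality uses the bound of Lemma~\ref{lemma:R_vector_Lipschitz}, which holds for every argument in $\mathbb{R}^N$, not only on $\Lambda$. This matters because $\boldsymbol{\lambda}+\delta\mathbf{u}$ may leave $\Lambda$. The bound is indeed global: the closed form \eqref{eq:R_i_closed_form} gives $0\le R_i\le\sum_{u,k}C_{u,k}$ for all $\lambda_i\in\mathbb{R}$. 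As a sharper alternative, one can apply componentwise monotonicity of expectation to get $0\le R_{i,\delta}\le\sum_{u,k}C_{u,k}$ and then take the Euclidean norm.

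\emph{Technical points.} The expectations are well defined because $\mathbf{R}$ is continuous and bounded, so the integrand is integrable over the compact unit ball. The vector form of Jensen's inequality holds since the norm is convex and continuous. No step poses a real obstacle. The only care needed is to stress that the estimates of Lemma~\ref{lemma:R_vector_Lipschitz} are global, so the perturbed arguments lying outside $\Lambda$ cause no issue.
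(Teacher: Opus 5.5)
Your proof is correct and follows the paper's argument: linearity of expectation, $\|\mathbb{E}[X]\|\le\mathbb{E}[\|X\|]$, and the pointwise bounds of Lemma~\ref{lemma:R_vector_Lipschitz}, with boundedness handled the same way (the paper just says ``similarly''). Your remark that these bounds hold on all of $\mathbb{R}^N$, so perturbed points $\boldsymbol{\lambda}+\delta\mathbf{u}\notin\Lambda$ cause no trouble, is a useful detail the paper leaves implicit.
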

\begin{proof}
    By definition, we have:
    \begin{align*}
        \|\mathbf{R}_{\delta}(\boldsymbol{\lambda})-\mathbf{R}_{\delta}(\boldsymbol{\lambda}')\|
        & =\left\| \mathbb{E} \left[ \mathbf{R}(\boldsymbol{\lambda }+\delta \mathbf{u})-\mathbf{R}(\boldsymbol{\lambda }' +\delta \mathbf{u}) \right] \right\| 
        \\
        & \le \mathbb{E} \left[ \left\| \mathbf{R}(\boldsymbol{\lambda }+\delta \mathbf{u})-\mathbf{R}(\boldsymbol{\lambda }' +\delta \mathbf{u}) \right\| \right] 
        \\
        & \le L_R\left\| \boldsymbol{\lambda }-\boldsymbol{\lambda }' \right\|.
    \end{align*}
    The first inequality is due to $\left\| \mathbb{E} \left[ X \right] \right\| \le \mathbb{E} \left[ \left\| X \right\| \right]$,
    and the second inequality is due to the Lemma~\ref{lemma:R_vector_Lipschitz}.
    Similarly, we can prove $\left\| \boldsymbol{R}_{\delta}\left( \boldsymbol{\lambda } \right) \right\| \le R_{\max}$.
\end{proof}

\begin{lemma}
\label{lemma:JR_delta_Lipschitz_tight}
The $\mathbf{J}\mathbf{R}_\delta(\boldsymbol{\lambda})$ is Lipschitz continuous with Lipschitz constant $\frac{c N L_R}{\delta}$, where $c$ is a constant.
\end{lemma}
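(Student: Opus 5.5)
The plan is to work component by component, using the standard sphere representation of the gradient of a ball-smoothed Lipschitz function. For each $i$, the smoothed scalar response $R_{i,\delta}(\boldsymbol{\lambda}) = \mathbb{E}_{\mathbf{u}\sim\mathrm{Unif}(\mathbb{B})}[R_i(\boldsymbol{\lambda}+\delta\mathbf{u})]$ has a gradient given by the divergence (Stokes) identity
\begin{equation*}
\nabla R_{i,\delta}(\boldsymbol{\lambda}) = \frac{N}{\delta}\,\mathbb{E}_{\mathbf{w}\sim\mathrm{Unif}(\mathbb{S}^{N-1})}\big[R_i(\boldsymbol{\lambda}+\delta\mathbf{w})\,\mathbf{w}\big].
\end{equation*}
This is the same fact already used in Lemma~\ref{lemma:jac_estimator} for unbiasedness, so it is available from \cite{Lin2022}. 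Rather than differentiating this expression again, which would need second derivatives of the nonsmooth $R_i$, I would compare it at two points $\boldsymbol{\lambda}$ and $\boldsymbol{\lambda}'$ directly.

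The key step is the difference bound. For a unit vector $\mathbf{v}$, I would write
\begin{equation*}
\big(\nabla R_{i,\delta}(\boldsymbol{\lambda})-\nabla R_{i,\delta}(\boldsymbol{\lambda}')\big)^\top\mathbf{v} = \frac{N}{\delta}\,\mathbb{E}\big[\big(R_i(\boldsymbol{\lambda}+\delta\mathbf{w})-R_i(\boldsymbol{\lambda}'+\delta\mathbf{w})\big)\,\mathbf{w}^\top\mathbf{v}\big].
\end{equation*}
Proposition~\ref{prop:response_function_property} gives $|R_i(\boldsymbol{\lambda}+\delta\mathbf{w})-R_i(\boldsymbol{\lambda}'+\delta\mathbf{w})| \le L_i\|\boldsymbol{\lambda}-\boldsymbol{\lambda}'\|$, and $\mathbb{E}|\mathbf{w}^\top\mathbf{v}| \le (\mathbb{E}[(\mathbf{w}^\top\mathbf{v})^2])^{1/2} = 1/\sqrt{N}$. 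Combining these yields
\begin{equation*}
\|\nabla R_{i,\delta}(\boldsymbol{\lambda})-\nabla R_{i,\delta}(\boldsymbol{\lambda}')\| \le \frac{\sqrt{N}L_i}{\delta}\|\boldsymbol{\lambda}-\boldsymbol{\lambda}'\|.
\end{equation*}
The cruder bound $|\mathbf{w}^\top\mathbf{v}|\le 1$ would instead give $\frac{N L_i}{\delta}$, which already matches the claimed form with $c=1$. I would use that cruder bound unless a sharper constant is needed later.

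Next I would assemble the matrix bound. The rows of $\mathbf{J}\mathbf{R}_\delta$ are $\nabla R_{i,\delta}^\top$, so
\begin{equation*}
\|\mathbf{J}\mathbf{R}_\delta(\boldsymbol{\lambda})-\mathbf{J}\mathbf{R}_\delta(\boldsymbol{\lambda}')\| \le \|\cdot\|_F \le \sqrt{\textstyle\sum_i (\cdot)^2}.
\end{equation*}
Using the sharp row bound gives $\frac{\sqrt{N}}{\delta}\sqrt{\sum_i L_i^2}\,\|\boldsymbol{\lambda}-\boldsymbol{\lambda}'\| \le \frac{N L_R}{\delta}\|\boldsymbol{\lambda}-\boldsymbol{\lambda}'\|$, since $\sum_i L_i^2 \le N L_R^2$ as in Lemma~\ref{lemma:jac_estimator}. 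So the statement holds with $c=1$. A safer route that avoids Frobenius slack works directly with the operator norm: take unit vectors $\mathbf{a},\mathbf{v}$, write $\mathbf{a}^\top(\cdot)\mathbf{v}$ as a single expectation of $\mathbf{a}^\top(\mathbf{R}(\boldsymbol{\lambda}+\delta\mathbf{w})-\mathbf{R}(\boldsymbol{\lambda}'+\delta\mathbf{w}))\,\mathbf{w}^\top\mathbf{v}$, and apply Lemma~\ref{lemma:R_vector_Lipschitz}. This gives $\frac{\sqrt{N}L_R}{\delta}$, which is even better and is still covered by the absolute constant $c$.

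The main obstacle is justifying the sphere representation of $\nabla R_{i,\delta}$ when $R_i$ is only Lipschitz and piecewise affine, not $C^1$. I would handle it by noting that $R_i$ is Lipschitz and hence differentiable almost everywhere (Rademacher). The smoothed function is then $C^1$, and its gradient can be obtained either by the divergence theorem applied to the ball integral or by differentiating the convolution $R_i * \mathbf{1}_{\delta\mathbb{B}}/\mathrm{vol}(\delta\mathbb{B})$. Differentiating the indicator kernel produces the surface measure on $\delta\mathbb{S}^{N-1}$, so no derivative ever falls on $R_i$. Since this is standard in the cited zeroth-order literature \cite{Lin2022}, I would invoke it rather than reprove it, and keep $c$ as an unspecified absolute constant to absorb the dimension-dependent factors.
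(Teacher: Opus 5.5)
Your proof is correct. Its main line has the same skeleton as the paper's: a per-row bound $\|\nabla R_{i,\delta}(\boldsymbol{\lambda})-\nabla R_{i,\delta}(\boldsymbol{\lambda}')\|\le \frac{\sqrt{N}L_i}{\delta}\|\boldsymbol{\lambda}-\boldsymbol{\lambda}'\|$, followed by Frobenius stacking with $L_i\le L_R$. The difference is where the per-row bound comes from. The paper cites \cite[Proposition~2.2]{Lin2022}, which gives it with an unspecified constant $c$. You instead derive it from the sphere identity $\nabla R_{i,\delta}(\boldsymbol{\lambda})=\frac{N}{\delta}\mathbb{E}[R_i(\boldsymbol{\lambda}+\delta\mathbf{w})\mathbf{w}]$ (the same fact behind the unbiasedness in Lemma~\ref{lemma:jac_estimator}), together with $\mathbb{E}[\mathbf{w}\mathbf{w}^\top]=\mathbf{I}/N$. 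The paper's route buys brevity; yours is self-contained and gives $c=1$ explicitly.

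Your operator-norm variant is a genuine improvement. Testing against unit vectors $\mathbf{a},\mathbf{v}$ and applying Lemma~\ref{lemma:R_vector_Lipschitz} to the whole vector at once avoids summing $N$ row bounds. It yields $\frac{\sqrt{N}L_R}{\delta}$, which would shrink the $1/\delta$ term of $L_{\nabla\bar{\varphi}_\delta}$ in Theorem~\ref{thm:convergence} by a factor of order $\sqrt{N}$.

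Two small corrections.
\begin{itemize}
\item Your aside that the crude bound $|\mathbf{w}^\top\mathbf{v}|\le 1$ ``already matches the claimed form'' holds only for a single row, or in the operator-norm assembly. If you combine per-row bounds $\frac{NL_i}{\delta}$ with Frobenius stacking, you get $\frac{N^{3/2}L_R}{\delta}$. So on the Frobenius route the estimate $\mathbb{E}|\mathbf{w}^\top\mathbf{v}|\le 1/\sqrt{N}$ is essential, not something to use only ``if a sharper constant is needed later''.
\item An absolute constant $c$ cannot ``absorb dimension-dependent factors''. This is moot, since your argument produces $c=1$ with no leftover $N$-dependence.
\end{itemize}
Separately, invoking Rademacher is unnecessary: continuity of $R_i$ already suffices for the sphere identity and for $R_{i,\delta}$ to be $C^1$.
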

\begin{proof}
Fix any $i\in\{1,\dots,N\}$ and define the scalar smoothing
$
R_{i,\delta}(\boldsymbol{\lambda})
\coloneqq
\mathbb{E}_{\mathbf{u}\sim\mathrm{Unif}(\mathbb{B})}\!\left[R_i(\boldsymbol{\lambda}+\delta\mathbf{u})\right].
$
Since $R_i(\cdot)$ is $L_i$-Lipschitz by Proposition~\ref{prop:response_function_property},
$R_{i,\delta}$ is differentiable and its gradient is Lipschitz with constant $\frac{cL_i\sqrt{N}}{\delta}$, i.e.,
\begin{equation}
\label{eq:Ri_grad_Lipschitz}
\left\|\nabla R_{i,\delta}(\boldsymbol{\lambda})-\nabla R_{i,\delta}(\boldsymbol{\lambda}')\right\|
\le
\frac{cL_i\sqrt{N}}{\delta}\left\|\boldsymbol{\lambda}-\boldsymbol{\lambda}'\right\|.
\end{equation}
The inequality is due to \cite[Proposition~2.2]{Lin2022}.
Recall that the $i$-th row of $\mathbf{J}\mathbf{R}_\delta(\boldsymbol{\lambda})$ equals $\nabla R_{i,\delta}(\boldsymbol{\lambda})^\top$.
Therefore, using the Frobenius norm,
\begin{align*}
\left\|\mathbf{J}\mathbf{R}_\delta(\boldsymbol{\lambda})-\mathbf{J}\mathbf{R}_\delta(\boldsymbol{\lambda}')\right\|_F^2
&=
\sum_{i=1}^N
\left\|
\nabla R_{i,\delta}(\boldsymbol{\lambda})-\nabla R_{i,\delta}(\boldsymbol{\lambda}')
\right\|^2
\\
&\le
\sum_{i=1}^N
\left(\frac{cL_i\sqrt{N}}{\delta}\left\|\boldsymbol{\lambda}-\boldsymbol{\lambda}'\right\|\right)^2
\\
&=
\left(\frac{c N L_R}{\delta}\right)^2
\left\|\boldsymbol{\lambda}-\boldsymbol{\lambda}'\right\|^2.
\end{align*}
Taking square roots gives
\[
\left\|\mathbf{J}\mathbf{R}_\delta(\boldsymbol{\lambda})-\mathbf{J}\mathbf{R}_\delta(\boldsymbol{\lambda}')\right\|_F
\le
\frac{c N L_R}{\delta}\left\|\boldsymbol{\lambda}-\boldsymbol{\lambda}'\right\|.
\]
Finally, since $\|\cdot\|\le \|\cdot\|_F$, we finish the proof.
\end{proof}

\subsection{Properties of the Ancillary Function}
To prove the convergence of the Bi-ZOL algorithm, we utilize the ancillary function defined as:
\begin{equation}
    \label{eq:mediate_potential_app}
    \bar{\varphi}_{\delta}(\boldsymbol{\lambda }) \coloneqq \varphi \left( \boldsymbol{\lambda },\mathbf{R}_{\delta} \right)
    =\boldsymbol{\lambda}^{\top}\mathbf{R}_{\delta}(\boldsymbol{\lambda})\;+\;\rho E_{\delta}\left( \boldsymbol{\lambda } \right) ^2,
\end{equation}
where $E_{\delta}\coloneqq \mathbf{1}^\top\!\big(\mathbf{P}^{\mathrm{b}}-\mathbf{R}_{\delta}\big)\;-\;P^{\mathrm{target}}$. We now prove $\nabla \bar{\varphi}_{\delta}(\boldsymbol{\lambda })$ is Lipschitz continuous.

\begin{lemma}
    \label{lemma:mediate_potential_grad_Lipschitz_tight}
    The gradient of the ancillary function $\nabla \bar{\varphi}_{\delta}(\cdot)$ is Lipschitz continuous on $\Lambda$ with constant:
    \begin{equation}
    \begin{split}
        L_{\nabla \bar{\varphi}_\delta} = & \, L_R (2 + 2\rho N L_R) \\
        & + \frac{c N L_R}{\delta} \left(\Lambda_{\max} + 2\rho \sqrt{N} (|c_0| + \sqrt{N} R_{\max})\right),
    \end{split}
    \end{equation}
    where $\Lambda_{\max}\coloneqq \sup_{\boldsymbol{\lambda}\in\Lambda}\|\boldsymbol{\lambda}\|$.
    \end{lemma}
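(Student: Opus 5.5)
We compute the gradient of the ancillary function explicitly by the chain rule and bound the variation of each term. Since $\mathbf{R}_\delta$ is differentiable (Lemma~\ref{lemma:JR_delta_Lipschitz_tight}), we have
\begin{equation*}
\nabla \bar{\varphi}_{\delta}(\boldsymbol{\lambda}) = \mathbf{R}_\delta(\boldsymbol{\lambda}) + \mathbf{J}\mathbf{R}_\delta(\boldsymbol{\lambda})^\top \mathbf{v}(\boldsymbol{\lambda}), \qquad \mathbf{v}(\boldsymbol{\lambda}) \coloneqq \boldsymbol{\lambda} - 2\rho E_\delta(\boldsymbol{\lambda})\mathbf{1}.
\end{equation*}
Here $\mathbf{v}(\boldsymbol{\lambda}) = \nabla_2\varphi(\boldsymbol{\lambda},\mathbf{R}_\delta(\boldsymbol{\lambda}))$. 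Write $E_\delta(\boldsymbol{\lambda}) = c_0 - \mathbf{1}^\top \mathbf{R}_\delta(\boldsymbol{\lambda})$ with $c_0 \coloneqq \mathbf{1}^\top\mathbf{P}^{\mathrm b} - P^{\mathrm{target}}$; this is the constant appearing in the statement. For $\boldsymbol{\lambda},\boldsymbol{\lambda}'\in\Lambda$ we then split
\begin{align*}
\nabla \bar{\varphi}_{\delta}(\boldsymbol{\lambda})-\nabla \bar{\varphi}_{\delta}(\boldsymbol{\lambda}')
&= [\mathbf{R}_\delta(\boldsymbol{\lambda})-\mathbf{R}_\delta(\boldsymbol{\lambda}')]
+ \mathbf{J}\mathbf{R}_\delta(\boldsymbol{\lambda})^\top[\mathbf{v}(\boldsymbol{\lambda})-\mathbf{v}(\boldsymbol{\lambda}')]\\
&\quad + [\mathbf{J}\mathbf{R}_\delta(\boldsymbol{\lambda})-\mathbf{J}\mathbf{R}_\delta(\boldsymbol{\lambda}')]^\top \mathbf{v}(\boldsymbol{\lambda}'),
\end{align*}
and bound the three terms separately.

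\textbf{First term.} By Lemma~\ref{lemma:R_delta_Lipschitz} it is at most $L_R\|\boldsymbol{\lambda}-\boldsymbol{\lambda}'\|$.

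\textbf{Second term.} We need $\|\mathbf{J}\mathbf{R}_\delta\|\le L_R$, which follows from the $L_R$-Lipschitz continuity of $\mathbf{R}_\delta$ (Lemma~\ref{lemma:R_delta_Lipschitz}) since $\mathbf{R}_\delta$ is differentiable. We also need the Lipschitz bound on $\mathbf{v}$:
\begin{equation*}
\|\mathbf{v}(\boldsymbol{\lambda})-\mathbf{v}(\boldsymbol{\lambda}')\| \le \|\boldsymbol{\lambda}-\boldsymbol{\lambda}'\| + 2\rho\|\mathbf{1}\|\,|\mathbf{1}^\top(\mathbf{R}_\delta(\boldsymbol{\lambda})-\mathbf{R}_\delta(\boldsymbol{\lambda}'))|,
\end{equation*}
where the last factor is at most $\sqrt{N}\cdot\sqrt{N}L_R\|\boldsymbol{\lambda}-\boldsymbol{\lambda}'\|$ by Cauchy–Schwarz. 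Hence $\mathbf{v}$ is $(1+2\rho N L_R)$-Lipschitz, and the second term contributes $L_R(1+2\rho N L_R)$. Together with the first term this gives $L_R(2+2\rho N L_R)$, matching the first part of the constant.

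\textbf{Third term.} By Lemma~\ref{lemma:JR_delta_Lipschitz_tight} it is at most $\frac{cNL_R}{\delta}\|\boldsymbol{\lambda}-\boldsymbol{\lambda}'\|\,\sup_{\Lambda}\|\mathbf{v}\|$. For the sup, $\|\mathbf{v}(\boldsymbol{\lambda}')\|\le \Lambda_{\max} + 2\rho\sqrt{N}|E_\delta(\boldsymbol{\lambda}')|$. Moreover $|E_\delta|\le |c_0| + |\mathbf{1}^\top\mathbf{R}_\delta| \le |c_0|+\sqrt{N}R_{\max}$, using the bound $\|\mathbf{R}_\delta\|\le R_{\max}$ from Lemma~\ref{lemma:R_delta_Lipschitz}. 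This yields exactly the second part of the stated constant.

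\textbf{Main obstacle.} The only delicate point is justifying the operator-norm bound $\|\mathbf{J}\mathbf{R}_\delta(\boldsymbol{\lambda})\|\le L_R$. The argument is that, for any unit vector $\mathbf{d}$, $\mathbf{J}\mathbf{R}_\delta(\boldsymbol{\lambda})\mathbf{d} = \lim_{t\to0}(\mathbf{R}_\delta(\boldsymbol{\lambda}+t\mathbf{d})-\mathbf{R}_\delta(\boldsymbol{\lambda}))/t$, whose norm is at most $L_R$. We also need the splitting to be organized so that the $1/\delta$ factor multiplies only the bounded quantity $\|\mathbf{v}\|$; this is why the difference of Jacobians is paired with $\mathbf{v}(\boldsymbol{\lambda}')$ rather than with the difference $\mathbf{v}(\boldsymbol{\lambda})-\mathbf{v}(\boldsymbol{\lambda}')$. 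Summing the three bounds completes the argument.
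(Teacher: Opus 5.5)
Your proposal is correct and follows the paper's proof essentially step for step. It uses the same chain-rule expression $\nabla\bar{\varphi}_\delta = \mathbf{R}_\delta + \mathbf{J}\mathbf{R}_\delta^\top\mathbf{v}$, the same add-and-subtract of $\mathbf{J}\mathbf{R}_\delta(\boldsymbol{\lambda})^\top\mathbf{v}(\boldsymbol{\lambda}')$, and the same bounds on the Lipschitz constant of $\mathbf{v}$ and on $\sup_{\Lambda}\|\mathbf{v}\|$; your directional-derivative justification of $\|\mathbf{J}\mathbf{R}_\delta(\boldsymbol{\lambda})\|\le L_R$ is a small addition, since the paper only asserts that bound.
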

    
    \begin{proof}
    We first derive the expression for $\nabla \bar{\varphi}_{\delta}(\boldsymbol{\lambda})$.
    Let $c_0\coloneqq \mathbf{1}^{\top}\mathbf{P}^{\mathrm{b}}-P^{\mathrm{target}}$.
    Recall that $E_\delta(\boldsymbol{\lambda})=c_0-\mathbf{1}^\top \mathbf{R}_\delta(\boldsymbol{\lambda})$. Thus, $\nabla E_\delta(\boldsymbol{\lambda}) = -\mathbf{J}\mathbf{R}_\delta(\boldsymbol{\lambda})^\top \mathbf{1}$.
    Applying the chain rule to $\bar{\varphi}_{\delta}$:
    \begin{align*}
       \nabla \bar{\varphi}_{\delta}(\boldsymbol{\lambda})
    &=
    \mathbf{R}_\delta(\boldsymbol{\lambda})
    +
    \mathbf{J}\mathbf{R}_\delta(\boldsymbol{\lambda})^\top \boldsymbol{\lambda}
    -2\rho E_\delta(\boldsymbol{\lambda})\,\mathbf{J}\mathbf{R}_\delta(\boldsymbol{\lambda})^\top \mathbf{1}
    \\
    &=
    \mathbf{R}_\delta(\boldsymbol{\lambda})
    +
    \mathbf{J}\mathbf{R}_\delta(\boldsymbol{\lambda})^\top \mathbf{v}(\boldsymbol{\lambda}), 
    \end{align*}
    where we define the auxiliary vector $\mathbf{v}(\boldsymbol{\lambda})\coloneqq \boldsymbol{\lambda}-2\rho E_\delta(\boldsymbol{\lambda})\mathbf{1}$.
    
    Fix arbitrary $\boldsymbol{\lambda},\boldsymbol{\lambda}'\in\Lambda$ and let $\Delta\boldsymbol{\lambda}\coloneqq \boldsymbol{\lambda}-\boldsymbol{\lambda}'$.
    Using the triangle inequality:
    \begin{align*}
    \left\|\nabla \bar{\varphi}_{\delta}(\boldsymbol{\lambda})-\nabla \bar{\varphi}_{\delta}(\boldsymbol{\lambda}')\right\| 
    &\le
    \left\|\mathbf{R}_\delta(\boldsymbol{\lambda})-\mathbf{R}_\delta(\boldsymbol{\lambda}')\right\|
    \\
    &\hspace{-1.0em}+
    \left\|\mathbf{J}\mathbf{R}_\delta(\boldsymbol{\lambda})^\top \mathbf{v}(\boldsymbol{\lambda})
    -\mathbf{J}\mathbf{R}_\delta(\boldsymbol{\lambda}')^\top \mathbf{v}(\boldsymbol{\lambda}')\right\|.
    \end{align*}
    For the second term, we add and subtract $\mathbf{J}\mathbf{R}_\delta(\boldsymbol{\lambda})^\top \mathbf{v}(\boldsymbol{\lambda}')$:
    \begin{align*}
    % (LHS)
    &\big\|\mathbf{J}\mathbf{R}_\delta(\boldsymbol{\lambda})^\top \mathbf{v}(\boldsymbol{\lambda})
    -\mathbf{J}\mathbf{R}_\delta(\boldsymbol{\lambda}')^\top \mathbf{v}(\boldsymbol{\lambda}')\big\|
    \\
    % s1
    &\le
    \big\|\mathbf{J}\mathbf{R}_\delta(\boldsymbol{\lambda})^\top\big(\mathbf{v}(\boldsymbol{\lambda})-\mathbf{v}(\boldsymbol{\lambda}')\big)\big\|
    \\
    &\quad + 
    \big\|\big(\mathbf{J}\mathbf{R}_\delta(\boldsymbol{\lambda})-\mathbf{J}\mathbf{R}_\delta(\boldsymbol{\lambda}')\big)^\top \mathbf{v}(\boldsymbol{\lambda}')\big\|
    \\
    % s2
    &\le
    \big\|\mathbf{J}\mathbf{R}_\delta(\boldsymbol{\lambda})\big\|\cdot
    \big\|\mathbf{v}(\boldsymbol{\lambda})-\mathbf{v}(\boldsymbol{\lambda}')\big\|
    \\
    &\quad + 
    \big\|\mathbf{J}\mathbf{R}_\delta(\boldsymbol{\lambda})-\mathbf{J}\mathbf{R}_\delta(\boldsymbol{\lambda}')\big\|\cdot
    \big\|\mathbf{v}(\boldsymbol{\lambda}')\big\|.
\end{align*}
    
    We now bound each term. First, for $\|\mathbf{v}(\boldsymbol{\lambda})-\mathbf{v}(\boldsymbol{\lambda}')\|$, substituting the definition of $\mathbf{v}$:
    \begin{align*}
    \left\|\mathbf{v}(\boldsymbol{\lambda})-\mathbf{v}(\boldsymbol{\lambda}')\right\|
    &\le
    \left\|\Delta\boldsymbol{\lambda}\right\|
    +2\rho\left|E_\delta(\boldsymbol{\lambda})-E_\delta(\boldsymbol{\lambda}')\right|\cdot \|\mathbf{1}\|.
    \end{align*}
    The mismatch difference is bounded by:
    \begin{align*}
        \left|E_\delta(\boldsymbol{\lambda})-E_\delta(\boldsymbol{\lambda}')\right|
    &=
    \left|\mathbf{1}^\top\big(\mathbf{R}_\delta(\boldsymbol{\lambda}')-\mathbf{R}_\delta(\boldsymbol{\lambda})\big)\right|
    \\
    &\le
    \|\mathbf{1}\|\cdot \left\|\mathbf{R}_\delta(\boldsymbol{\lambda})-\mathbf{R}_\delta(\boldsymbol{\lambda}')\right\|
    \\
    &\le
    \sqrt{N}\,L_R\left\|\Delta\boldsymbol{\lambda}\right\|.
    \end{align*}
    Substituting $\|\mathbf{1}\|=\sqrt{N}$ yields:
    \begin{equation}
    \label{eq:v_diff_bound_tight}
    \left\|\mathbf{v}(\boldsymbol{\lambda})-\mathbf{v}(\boldsymbol{\lambda}')\right\|
    \le
    \left(1+2\rho N L_R\right)\left\|\Delta\boldsymbol{\lambda}\right\|.
    \end{equation}
    
    Next, we bound $\|\mathbf{v}(\boldsymbol{\lambda}')\|$. Using Lemma~\ref{lemma:R_delta_Lipschitz}:
    \begin{align*}
        |E_\delta(\boldsymbol{\lambda}')|
    &=
    |c_0-\mathbf{1}^\top \mathbf{R}_\delta(\boldsymbol{\lambda}')|
    \\
    &\le
    |c_0|+\|\mathbf{1}\|\cdot \|\mathbf{R}_\delta(\boldsymbol{\lambda}')\|
    \\
    &\le
    |c_0|+\sqrt{N}R_{\max}.
    \end{align*}
    Therefore,
    \begin{equation}
    \label{eq:v_bound_tight}
    \left\|\mathbf{v}(\boldsymbol{\lambda}')\right\|
    \le
    \Lambda_{\max}+2\rho\sqrt{N}\big(|c_0|+\sqrt{N}R_{\max}\big).
    \end{equation}
    
    Finally, recall that $\|\mathbf{J}\mathbf{R}_\delta(\boldsymbol{\lambda})\|\le L_R$ (because of Lemma~\ref{lemma:R_delta_Lipschitz}, we have $\mathbf{R}_\delta$ is $L_R$-Lipschitz and differentiable). 
    Using Lemma~\ref{lemma:JR_delta_Lipschitz_tight}, the Jacobian difference is bounded by $\frac{c N L_R}{\delta}\left\|\Delta\boldsymbol{\lambda}\right\|$.
    Combining these into the triangle inequality:
    \begin{align*}
    &\| \nabla \bar{\varphi}_{\delta}(\boldsymbol{\lambda}) - \nabla \bar{\varphi}_{\delta}(\boldsymbol{\lambda}') \| \\
    &\le L_R \|\Delta\boldsymbol{\lambda}\| + L_R(1+2\rho N L_R) \|\Delta\boldsymbol{\lambda}\| \\
    &\quad + \frac{c N L_R}{\delta} \Big(\Lambda_{\max} + 2\rho\sqrt{N}\big(|c_0|+\sqrt{N}R_{\max}\big)\Big) \|\Delta\boldsymbol{\lambda}\| \\
    &= \bigg[ L_R (2 + 2\rho N L_R) \\
    &\quad + \frac{c N L_R}{\delta} \Big(\Lambda_{\max} + 2\rho\sqrt{N}\big(|c_0|+\sqrt{N}R_{\max}\big)\Big) \bigg] \|\boldsymbol{\lambda}-\boldsymbol{\lambda}'\|.
\end{align*}
    This completes the proof.
    \end{proof}

\subsection{Bias Error Bound}
With the smoothness of the ancillary function established, we now quantify the difference between approximate hypergradient $\nabla \tilde{\varphi}_{\delta}$ (which uses the observed response $\mathbf{R}$) and the gradient of the ancillary function $\nabla \bar{\varphi}_{\delta}$ (which uses the smoothed response $\mathbf{R}_\delta$).

\begin{lemma}
    \label{lemma:bias_error}
    The difference between the approximate hypergradient and the gradient of the ancillary function is bounded by:
    \begin{equation}
        \label{eq:bias_error}
        \|\nabla \bar{\varphi}_{\delta}(\boldsymbol{\lambda })-\nabla \tilde{\varphi}_{\delta}(\boldsymbol{\lambda })\| \le \left( 1+2\rho NL_R \right) \delta L_R.
    \end{equation}
\end{lemma}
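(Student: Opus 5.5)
Both gradients share the same smoothed Jacobian $\mathbf{J}\mathbf{R}_\delta(\boldsymbol{\lambda})$ but evaluate the partials of $\varphi$ at different responses: $\nabla\tilde{\varphi}_\delta$ at the observed $\mathbf{R}(\boldsymbol{\lambda})$, and $\nabla\bar{\varphi}_\delta$ at $\mathbf{R}_\delta(\boldsymbol{\lambda})$. The plan is to reduce everything to a bound on $\|\mathbf{R}_\delta(\boldsymbol{\lambda})-\mathbf{R}(\boldsymbol{\lambda})\|$. Using the chain-rule expression from the proof of Lemma~\ref{lemma:mediate_potential_grad_Lipschitz_tight} and the definition \eqref{eq:approximate_hypergradient} with \eqref{eq:partials_closedform}, I would write
\begin{align*}
\nabla \bar{\varphi}_{\delta}(\boldsymbol{\lambda})-\nabla \tilde{\varphi}_{\delta}(\boldsymbol{\lambda})
&=\big(\mathbf{R}_\delta(\boldsymbol{\lambda})-\mathbf{R}(\boldsymbol{\lambda})\big)\\
&\quad-2\rho\big(E_\delta(\boldsymbol{\lambda})-E(\boldsymbol{\lambda})\big)\mathbf{J}\mathbf{R}_\delta(\boldsymbol{\lambda})^\top\mathbf{1}.
\end{align*}
The $\boldsymbol{\lambda}$-parts of $\mathbf{v}$ cancel exactly, which is why only $\rho$-terms survive.

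I would then bound the two terms separately. For the mismatch, $E_\delta-E=\mathbf{1}^\top(\mathbf{R}-\mathbf{R}_\delta)$, so $|E_\delta-E|\le\sqrt{N}\,\|\mathbf{R}_\delta-\mathbf{R}\|$. For the Jacobian, $\|\mathbf{J}\mathbf{R}_\delta(\boldsymbol{\lambda})\|\le L_R$ because $\mathbf{R}_\delta$ is $L_R$-Lipschitz and differentiable (Lemma~\ref{lemma:R_delta_Lipschitz}), and $\|\mathbf{1}\|=\sqrt{N}$. Combining these gives
\[
\|\nabla \bar{\varphi}_{\delta}-\nabla \tilde{\varphi}_{\delta}\|\le(1+2\rho N L_R)\,\|\mathbf{R}_\delta(\boldsymbol{\lambda})-\mathbf{R}(\boldsymbol{\lambda})\|.
\]

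The remaining step is the smoothing error $\|\mathbf{R}_\delta(\boldsymbol{\lambda})-\mathbf{R}(\boldsymbol{\lambda})\|\le\delta L_R$. By Jensen's inequality and Lemma~\ref{lemma:R_vector_Lipschitz},
\[
\|\mathbb{E}_{\mathbf{u}}[\mathbf{R}(\boldsymbol{\lambda}+\delta\mathbf{u})-\mathbf{R}(\boldsymbol{\lambda})]\|\le L_R\,\delta\,\mathbb{E}\|\mathbf{u}\|\le\delta L_R,
\]
since $\|\mathbf{u}\|\le1$ on the unit ball. Substituting this into the previous display yields the claimed bound.

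The main point needing care is domain and differentiability. The perturbed points $\boldsymbol{\lambda}+\delta\mathbf{u}$ may leave $\Lambda$, but $\mathbf{R}$ is defined and globally Lipschitz on all of $\mathbb{R}^N$ by Proposition~\ref{prop:response_function_property}, so this causes no problem. The bound $\|\mathbf{J}\mathbf{R}_\delta\|\le L_R$ is the one step that uses differentiability of the smoothed map, which follows from standard randomized-smoothing results already invoked for Lemma~\ref{lemma:JR_delta_Lipschitz_tight}.
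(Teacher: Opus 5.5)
Your proof is correct and takes essentially the same route as the paper. The paper first shows that $\nabla_2\varphi(\boldsymbol{\lambda},\cdot)$ is $2\rho N$-Lipschitz in $\mathbf{R}$, which is exactly your computation $|E_\delta-E|\le\sqrt{N}\,\|\mathbf{R}_\delta-\mathbf{R}\|$ combined with $\|\mathbf{1}\|=\sqrt{N}$; otherwise both arguments use the same split $(\mathbf{R}_\delta-\mathbf{R})+\mathbf{J}\mathbf{R}_\delta^\top(\cdots)$, the bound $\|\mathbf{J}\mathbf{R}_\delta\|\le L_R$, and the Jensen smoothing bound $\|\mathbf{R}_\delta-\mathbf{R}\|\le\delta L_R$.
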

\begin{proof}
    First, we show that the partial derivative $\nabla_2 \varphi (\boldsymbol{\lambda}, \cdot)$ is Lipschitz continuous with respect to the response vector $\mathbf{R}$.
    Recall from \eqref{eq:partials_closedform} that $\nabla_2 \varphi(\boldsymbol{\lambda}, \mathbf{R}) = \boldsymbol{\lambda} - 2\rho (\mathbf{1}^\top \mathbf{P}^{\mathrm{b}} - \mathbf{1}^\top \mathbf{R} - P^{\mathrm{target}}) \mathbf{1}$.
    For any vectors $\mathbf{R}, \mathbf{R}'$:
    \begin{align*}
        \left\| \nabla _2\varphi \left( \boldsymbol{\lambda },\mathbf{R}' \right) -\nabla _2\varphi \left( \boldsymbol{\lambda },\mathbf{R} \right) \right\| 
        &= \left\| -2\rho (-\mathbf{1}^\top \mathbf{R}' - (-\mathbf{1}^\top \mathbf{R})) \mathbf{1} \right\|
        \\
        &= 2\rho \left\| (\mathbf{1}^{\top}\mathbf{R}-\mathbf{1}^{\top}\mathbf{R}')\mathbf{1} \right\| 
        \\
        &\le 2\rho \|\mathbf{1}\| \cdot |\mathbf{1}^{\top}(\mathbf{R}-\mathbf{R}')|
        \\
        &\le 2\rho \sqrt{N} \cdot \|\mathbf{1}\| \|\mathbf{R}-\mathbf{R}'\|
        \\
        &= 2\rho N \left\| \mathbf{R}-\mathbf{R}' \right\|.
    \end{align*}
    Thus, the Lipschitz constant is $L_{\varphi 2} \coloneqq 2\rho N$.

    Next, we expand the difference between the gradients. Recall $\nabla \tilde{\varphi}_{\delta} = \mathbf{R} + \mathbf{J}\mathbf{R}_\delta^\top \nabla_2\varphi(\boldsymbol{\lambda}, \mathbf{R})$ and $\nabla \bar{\varphi}_{\delta} = \mathbf{R}_\delta + \mathbf{J}\mathbf{R}_\delta^\top \nabla_2\varphi(\boldsymbol{\lambda}, \mathbf{R}_\delta)$.
    \begin{align*}
        &\left\| \nabla \bar{\varphi}_{\delta}(\boldsymbol{\lambda })-\nabla \tilde{\varphi}_{\delta}(\boldsymbol{\lambda }) \right\| 
        \\
        &=\left\| (\mathbf{R}_{\delta} -\mathbf{R}) + \mathbf{J}\mathbf{R}_{\delta}^{\top}\left( \nabla _2\varphi \left( \boldsymbol{\lambda },\mathbf{R}_{\delta} \right) -\nabla _2\varphi \left( \boldsymbol{\lambda },\mathbf{R} \right) \right) \right\| 
        \\
        &\le \left\| \mathbf{R}_{\delta} -\mathbf{R} \right\| + \|\mathbf{J}\mathbf{R}_{\delta}\| \left\| \nabla _2\varphi \left( \boldsymbol{\lambda },\mathbf{R}_{\delta} \right) -\nabla _2\varphi \left( \boldsymbol{\lambda },\mathbf{R} \right) \right\| 
        \\
        &\le \left\| \mathbf{R}_{\delta} -\mathbf{R} \right\| + L_R (2\rho N) \left\| \mathbf{R}_{\delta} -\mathbf{R} \right\| 
        \\
        &= (1 + 2\rho N L_R) \left\| \mathbf{R}_{\delta} -\mathbf{R} \right\|.
    \end{align*}
    The second inequality uses $\|\mathbf{J}\mathbf{R}_\delta(\boldsymbol{\lambda})\|\le L_R$ (from Lemma~\ref{lemma:R_delta_Lipschitz}).

    Finally, we bound the smoothing error $\|\mathbf{R}_{\delta}(\boldsymbol{\lambda}) - \mathbf{R}(\boldsymbol{\lambda})\|$. Using the definition of randomized smoothing over the unit ball $\mathbb{B}$:
    \begin{align*}
        \|\mathbf{R}_{\delta}(\boldsymbol{\lambda}) - \mathbf{R}(\boldsymbol{\lambda})\| 
        &= \left\| \mathbb{E}_{\mathbf{u}\sim \mathrm{Unif}(\mathbb{B})}[\mathbf{R}(\boldsymbol{\lambda} + \delta \mathbf{u})] - \mathbf{R}(\boldsymbol{\lambda}) \right\|
        \\
        &\le \mathbb{E}_{\mathbf{u}}[\|\mathbf{R}(\boldsymbol{\lambda} + \delta \mathbf{u}) - \mathbf{R}(\boldsymbol{\lambda})\|]
        \\
        &\le \mathbb{E}_{\mathbf{u}}[L_R \|\delta \mathbf{u}\|] 
        \\
        &= \delta L_R \mathbb{E}_{\mathbf{u}}[\|\mathbf{u}\|] 
        \\
        &\le \delta L_R.
    \end{align*}

    The first inequality is due to the Jensen's inequality.
    Substituting this back yields the final bound:
    \begin{equation*}
        \left\| \nabla \bar{\varphi}_{\delta}(\boldsymbol{\lambda })-\nabla \tilde{\varphi}_{\delta}(\boldsymbol{\lambda }) \right\| \le \left( 1+2\rho NL_R \right) \delta L_R.
    \end{equation*}
\end{proof}

\subsection{Proof of Theorem \ref{thm:convergence}}
\begin{proof}
    Let $\hat{\mathbf{z}}_k \coloneqq \arg\max_{\mathbf{z}\in \Lambda} \langle \mathbf{z}, -\nabla \tilde{\varphi}_{\delta}(\boldsymbol{\lambda}_k) \rangle$ be the optimal direction for the approximate hypergradient, and recall that $\mathbf{z}_k$ is the oracle output for the stochastic estimate $\widehat{\mathbf{g}}_k$.
    
    Using the Lipschitz smoothness of the ancillary function $\bar{\varphi}_{\delta}$ (Lemma~\ref{lemma:mediate_potential_grad_Lipschitz_tight}) and the update $\boldsymbol{\lambda}_{k+1} = \boldsymbol{\lambda}_k + \gamma(\mathbf{z}_k - \boldsymbol{\lambda}_k)$:
    \begin{align*}
        \bar{\varphi}_{\delta}(\boldsymbol{\lambda}_{k+1}) 
        &\le \bar{\varphi}_{\delta}(\boldsymbol{\lambda}_k) + \gamma \langle \nabla \bar{\varphi}_{\delta}(\boldsymbol{\lambda}_k), \mathbf{z}_k-\boldsymbol{\lambda}_k \rangle 
        \\
        & \qquad + \frac{L_{\nabla \bar{\varphi}_{\delta}}\gamma^2}{2} \|\mathbf{z}_k-\boldsymbol{\lambda}_k\|^2
        \\
        &\le \bar{\varphi}_{\delta}(\boldsymbol{\lambda}_k) + \gamma \langle \nabla \bar{\varphi}_{\delta}(\boldsymbol{\lambda}_k), \mathbf{z}_k-\boldsymbol{\lambda}_k \rangle + \frac{L_{\nabla \bar{\varphi}_{\delta}} D^2}{2} \gamma^2.
    \end{align*}
    We decompose the inner product term to isolate the FW gap for $\nabla \tilde{\varphi}_{\delta}$:
    \begin{align*}
   & \langle \nabla \bar{\varphi}_{\delta}(\boldsymbol{\lambda}_k), \mathbf{z}_k-\boldsymbol{\lambda}_k \rangle \notag \\
    &= 
    \underbrace{\langle \nabla \tilde{\varphi}_{\delta}(\boldsymbol{\lambda}_k), \hat{\mathbf{z}}_k-\boldsymbol{\lambda}_k \rangle}_{\text{True Gap } (-\mathcal{G}_\delta)}
    + 
    \underbrace{\langle \nabla \tilde{\varphi}_{\delta}(\boldsymbol{\lambda}_k), \mathbf{z}_k-\hat{\mathbf{z}}_k \rangle}_{\text{Optimization Error}} \notag \\
    &\quad + 
    \underbrace{\langle \nabla \bar{\varphi}_{\delta}(\boldsymbol{\lambda}_k)-\nabla \tilde{\varphi}_{\delta}(\boldsymbol{\lambda}_k), \mathbf{z}_k-\boldsymbol{\lambda}_k \rangle}_{\text{Bias Error}}.
\end{align*}
    
    \textbf{Bounding the Optimization Error:}
    Recall that $\mathbf{z}_k$ minimizes $\langle \widehat{\mathbf{g}}_k, \cdot \rangle$. Thus, $\langle \widehat{\mathbf{g}}_k, \mathbf{z}_k-\boldsymbol{\lambda}_k \rangle \le \langle \widehat{\mathbf{g}}_k, \hat{\mathbf{z}}_k-\boldsymbol{\lambda}_k \rangle$. 
    We can write:
    \begin{align*}
        \langle \nabla \tilde{\varphi}_{\delta}(\boldsymbol{\lambda}_k), \mathbf{z}_k-\hat{\mathbf{z}}_k \rangle
        &= \langle \nabla \tilde{\varphi}_{\delta}(\boldsymbol{\lambda}_k) - \widehat{\mathbf{g}}_k, \mathbf{z}_k-\hat{\mathbf{z}}_k \rangle 
        \\
        &\qquad + \langle \widehat{\mathbf{g}}_k, \mathbf{z}_k-\hat{\mathbf{z}}_k \rangle
        \\
        &\le \langle \nabla \tilde{\varphi}_{\delta}(\boldsymbol{\lambda}_k) - \widehat{\mathbf{g}}_k, \mathbf{z}_k-\hat{\mathbf{z}}_k \rangle + 0
        \\
        &\le \|\nabla \tilde{\varphi}_{\delta}(\boldsymbol{\lambda}_k) - \widehat{\mathbf{g}}_k\| \cdot \|\mathbf{z}_k-\hat{\mathbf{z}}_k\|
        \\
        &\le \|\nabla \tilde{\varphi}_{\delta}(\boldsymbol{\lambda}_k) - \widehat{\mathbf{g}}_k\| D.
    \end{align*}

    \textbf{Bounding the Bias Error:}
    Using Lemma~\ref{lemma:bias_error}:
    \begin{align*}
        \langle \nabla \bar{\varphi}_{\delta}(\boldsymbol{\lambda}_k)-\nabla \tilde{\varphi}_{\delta}(\boldsymbol{\lambda}_k), \mathbf{z}_k-\boldsymbol{\lambda}_k \rangle
        &\le \|\nabla \bar{\varphi}_{\delta}(\boldsymbol{\lambda}_k)-\nabla \tilde{\varphi}_{\delta}(\boldsymbol{\lambda}_k)\| D
        \\
        &\le (1+2\rho N L_R)L_R \delta D 
        \\
        &= C_{\mathrm{bias}} D \delta.
    \end{align*}

    Substituting these back into the descent inequality and rearranging:
    \begin{align*}
        &\gamma \mathcal{G}_{\delta}(\boldsymbol{\lambda}_k) 
        \le 
        \bar{\varphi}_{\delta}(\boldsymbol{\lambda}_k) - \bar{\varphi}_{\delta}(\boldsymbol{\lambda}_{k+1})
        \\
        &+ \gamma \|\nabla \tilde{\varphi}_{\delta}(\boldsymbol{\lambda}_k) - \widehat{\mathbf{g}}_k\| D
        + \gamma C_{\mathrm{bias}} D \delta
        + \frac{L_{\nabla \bar{\varphi}_{\delta}} D^2}{2} \gamma^2.
    \end{align*}
    Taking the expectation and summing over $k=0, \dots, T-1$:
    \begin{align*}
        \frac{1}{T}\sum_{k=0}^{T-1} \mathbb{E}[\mathcal{G}_{\delta}(\boldsymbol{\lambda}_k)]
        &\le 
        \frac{\Delta_0}{\gamma T}
        +
        \frac{D}{T} \sum_{k=0}^{T-1} \mathbb{E}[\|\nabla \tilde{\varphi}_{\delta}(\boldsymbol{\lambda}_k) - \widehat{\mathbf{g}}_k\|]
        \\
        &+
        C_{\mathrm{bias}} D \delta
        +
        \frac{L_{\nabla \bar{\varphi}_{\delta}} D^2}{2} \gamma.
    \end{align*}
    
    \textbf{Bounding the Stochastic Gradient Error:}
    Recall $\nabla \tilde{\varphi}_{\delta} = \nabla_1 \varphi + \mathbf{J}\mathbf{R}_{\delta}^\top \nabla_2 \varphi$ and $\widehat{\mathbf{g}}_k = \nabla_1 \varphi + \widehat{\mathbf{J}}{\mathbf{R}}^\top \nabla_2 \varphi$.

    For the second term in the right-hand side, we have:
    \begin{align*}
        &\frac{1}{T}\sum_{k=0}^{T-1}{\mathbb{E} \left[ \left\| \nabla \tilde{\varphi}_{\delta}(\boldsymbol{\lambda }_k)-\widehat{\mathbf{g}}(\boldsymbol{\lambda }_k) \right\| \right]}
\\
&\overset{\left( s.1 \right)}{\le} \sqrt{\frac{1}{T}\sum_{k=0}^{T-1}{\mathbb{E} \left[ \left\| \nabla \tilde{\varphi}_{\delta}(\boldsymbol{\lambda }_k)-\widehat{\mathbf{g}}(\boldsymbol{\lambda }_k) \right\| ^2 \right]}}
\\
&\overset{\left( s.2 \right)}{\le} M_{\varphi 2}\sqrt{\frac{1}{T}\sum_{k=0}^{T-1}{\mathbb{E} \left[ \big\| \mathbf{J}\mathbf{R}_{\delta}-\widehat{\mathbf{J}}{\mathbf{R}} \big\|_F ^2 \right]}}
\\
&\overset{\left( s.3 \right)}{\le} M_{\varphi 2}\sqrt{\frac{1}{T}\sum_{k=0}^{T-1}{\mathbb{E} \left[ \big\| \widehat{\mathbf{J}}{\mathbf{R}} \big\|_F ^2 \right]}}
\\
&\overset{\left( s.4 \right)}{\le} M_{\varphi 2}\sqrt{16\sqrt{2\pi}\,N^2 L_{R}^{2}}
\\
&=4M_{\varphi 2}\left( 2\pi \right) ^{1/4}N L_R.
    \end{align*}
where $M_{\varphi 2}$ is the upper bound of $\nabla_2 \varphi$.
The $\left( s.1 \right)$ is due to the AM-QM inequality and Jensen’s inequality.
The $\left( s.2 \right)$ follows from the upper bound of $\nabla_2 \varphi$ and $\|\boldsymbol{A}^\top \boldsymbol{v}\| \le \|\boldsymbol{A}\|_F \|\boldsymbol{v}\|$.
The $\left( s.3 \right)$ follows from the fact $\mathbb{E} [\parallel X-E[X]\parallel_F ^2]\le \mathbb{E} [\parallel X\parallel_F ^2]$ by conditional unbiasedness in Lemma~\ref{lemma:jac_estimator}.
The $\left( s.4 \right)$ is from the proof in Lemma~\ref{lemma:jac_estimator}.
    Substituting this constant bound into the summation yields the final result.
\end{proof}

\bibliographystyle{ieeetr}
\bibliography{references}
\balance

\endgroup
\end{document}